\documentclass[11pt]{article}

\usepackage[utf8]{inputenc}
\usepackage{setspace}
\usepackage{fullpage}
\usepackage{graphicx}
\usepackage{bbm}
\graphicspath{ {./figures/} }
\usepackage{subcaption}
\usepackage{amsmath,amsfonts,amssymb}
\usepackage{xcolor}
\usepackage[colorlinks=true,
            citecolor=blue]{hyperref}
\usepackage{fullpage}
\usepackage{amsthm}
\usepackage{algorithm}
\usepackage{algpseudocode}
\usepackage{bbm}
\usepackage{pgfplots}
\pgfplotsset{compat=1.16}
\usetikzlibrary{arrows.meta}
\usepackage{mathtools}
\usepackage{float}
\usepackage{indentfirst}
\usepackage{tikz}
\usepackage{accents}
\usetikzlibrary{external}
\usetikzlibrary{arrows.meta}
\usetikzlibrary{calc,decorations.pathmorphing}
\usetikzlibrary{shapes.misc}
\tikzset{cross/.style={cross out, draw=black, minimum size=2*(#1-\pgflinewidth), inner sep=0pt, outer sep=0pt},
cross/.default={1pt}}
\allowdisplaybreaks
\newtheorem{theorem}{Theorem}[section]

\newtheorem{lemma}{Lemma}[section]
\newtheorem{prop}{Proposition}[section]
\newtheorem*{prop*}{Proposition}

\theoremstyle{definition}
\newtheorem{definition}{Definition}[section]

\theoremstyle{remark}
\newtheorem{remark}{Remark}[section]
\numberwithin{equation}{section}
\usepackage[capitalize]{cleveref}
\usepackage{xspace}
\crefname{prop}{Proposition}{Propositions}

\makeatletter
\AddToHook{cmd/appendix/before}{\def\cref@section@alias{appendix}\def\cref@subsection@alias{appendix}}
\makeatother

\newcommand{\forr}{\operatorname{forr}}
\newcommand{\gsp}{\operatorname{GSP}}
\newcommand{\etal}{\emph{et al.\@}}

\newcommand{\yes}{\textsf{yes}\xspace}
\newcommand{\no}{\textsf{no}\xspace}
\newcommand{\bool}{\{0,1\}}

\newcommand{\zero}{0^n}
\newcommand{\field}{\mathbb{F}}

\usepackage[backend=biber,style=alphabetic,maxbibnames=99,maxalphanames=99,url=false]{biblatex}
\providecommand{\email}[1]{\href{mailto:#1}{\nolinkurl{#1}\xspace}}

\title{The Quantumly Fast and the Classically Forrious}

\author{Cl\'ement L. Canonne\thanks{The University of Sydney.
Email: \email{clement.canonne@sydney.edu.au}} \and 
Kenny Chen\thanks{The University of Sydney.
Email: \email{kche5493@uni.sydney.edu.au}} \and 
Juli\'an Mestre\thanks{The University of Sydney.
Email: \email{julian.mestre@sydney.edu.au}}}

\begin{document}

\maketitle

\begin{abstract}
We study the \emph{extremal Forrelation problem}, where, provided with oracle access to Boolean functions $f$ and $g$ promised to satisfy either $\forr(f,g)=1$ or $\forr(f,g)=-1$, one must determine (with high probability) which of the two cases holds while performing as few oracle queries as possible. It is well known that this problem can be solved with \emph{one} quantum query; yet, Girish and Servedio (TQC 2025) recently showed this problem requires $\widetilde\Omega(2^{n/4})$ classical queries, and conjectured that the optimal lower bound is $\widetilde\Omega(2^{n/2})$. Through a completely different construction, we improve on their result and prove a lower bound of $\Omega(2^{0.4999n})$, which matches the conjectured lower bound up to an arbitrarily small constant in the exponent.

\end{abstract}
\section{Introduction}
One of the first examples of provable quantum advantage is famously due to Shor~\cite{DBLP:journals/siamcomp/Shor97}, who showed factoring could be done in polynomial time on a quantum computer. Ever since then, two main overarching questions have been considered. First, for what other problems are there provable quantum speedups? Secondly, how much more powerful are quantum computers than classical computers? In view of our dire lack of classical lower bounds in most computational models, which would be a prerequisite to establishing any strong separation between the classical and quantum settings, one natural way to analyze these two questions is in the \emph{query complexity model}. Here, an algorithm is given black box access to a function(s), and tasked to learn some property of the function. The question then asked is what the minimum number of queries needed by algorithms, both classical and quantum, is in order to perform this task.

Many quantum speedups have been proven in this query complexity model, such as for searching \cite{DBLP:conf/stoc/Grover96} and property testing \cite{DBLP:journals/toc/MontanaroW16}, to name a few: the interested reader is referred to~\cite{hamoudi2025briefintroductionquantumquery} for a survey on quantum query complexity results. One of the flagship tasks in this query complexity model is the \emph{Forrelation} problem, introduced by Aaronson \cite{DBLP:conf/stoc/Aaronson10}. At a high level, this asks an algorithm to determine, in as few queries as possible, if a function $f$ is (significantly) correlated or anticorrelated with the Fourier transform of another function $g$. The reason for this problem garnering so much attention is that it is one that shows a maximal separation between quantum and classical queries. Namely, Aaronson and Ambainis showed that a quantum computer can successfully perform this task with a \emph{single} query, yet any classical computer requires \emph{exponential} query complexity~--~specifically, $\widetilde\Omega(2^{n/2})$ queries~\cite{DBLP:journals/siamcomp/AaronsonA18}.\footnote{Throughout, the $\tilde{\Omega}$ notation omits polylogarithmic factors in the argument: i.e., in this case, $\mathrm{poly}(n)$ factors.} This lower bound holds when the (anti)correlation threshold of the promise problem is some $\varepsilon =\Omega(1)$, bounded away from 1, but the techniques employed fail as $\varepsilon$ gets close to 1. Note that this ``extremal'' version of the problem, the $\varepsilon=1$ case, corresponds to the setting where $f$ is promised to either be perfectly correlated or anticorrelated with the Fourier transform of $g$: interestingly, in this setting, the quantum algorithm is able to succeed with probability $1$, revealing some insights about the power of exact quantum computation. 
Very recently, Girish and Servedio~\cite{DBLP:journals/eccc/GirishS25} studied this extremal version of the Forrelation problem, and were able to establish a lower bound of $\widetilde\Omega(2^{n/4})$ classical queries; further, they conjectured that the lower bound should match that obtained by Aaronson and Ambainis in the non-extremal case. 

Other similar problems have also been studied for the purposes of finding quantum speedups. One of the first, which served as inspiration for Shor's celebrated factoring algorithm, was Simon's algorithm for finding a hidden period \cite{DBLP:journals/siamcomp/Simon97a}. The generalization of this problem, known broadly as \emph{hidden subgroup finding}, has been the subject of much fruitful study. One such generalization, studied by Ye~\etal~\cite{DBLP:journals/iandc/YeHLW21} is the Generalized Simon's Problem, where the period is generalized to a group, instead of a period, and the field is over some prime, rather than Boolean values. They were able to demonstrate that the advantage achieved by Simon was generalizable to this extent.
\subsection{Our Results}
Our main technical contribution is an almost tight lower bound for the extremal Forrelation problem (formally defined in \cref{sec:extremal_Forrelation}).
\begin{theorem}[Main Result (Informal, see~\cref{thm:adaptive_k_collisions})]
    \label{thm:main:forrelation}
    Any algorithm for the extremal Forrelation problem requires at least $\Omega(2^{0.499n})$ queries.
\end{theorem}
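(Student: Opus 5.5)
The plan is to use Yao's minimax principle. I will build two distributions, $\mathcal{D}_{+}$ supported on pairs $(f,g)$ with $\forr(f,g)=1$ and $\mathcal{D}_{-}$ supported on pairs with $\forr(f,g)=-1$. I will then show that no deterministic adaptive algorithm making $q=2^{(1/2-\delta)n}$ queries can tell them apart with constant advantage. Any $\delta>0$ works, and $\delta=0.001$ gives the stated bound.

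\textbf{Construction.} Extremal Forrelation means $g=\pm\hat f$ exactly, which forces $f$ to be bent-like, with every Fourier coefficient equal to $\pm 2^{-n/2}$. The natural family comes from subspaces. Identify $\field_2^n=\field_2^{n/2}\times\field_2^{n/2}$ and take Maiorana--McFarland functions
\[
f(x,y)=(-1)^{\langle x,\pi(y)\rangle+h(y)},
\]
where $\pi$ is a random permutation and $h$ a random Boolean function. The dual is explicit: $\hat f(a,b)=\pm(-1)^{\langle b,\pi^{-1}(a)\rangle+h(\pi^{-1}(a))}$. So in $\mathcal{D}_{+}$ I set $g=\tilde f$, the dual function. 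In $\mathcal{D}_{-}$ I use the same $f$ with $g=-\tilde f$, or equivalently flip $h$ on one side.

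Concretely, I will parametrize both $f$ and $g$ by the same hidden random objects $(\pi,h)$ and an unknown sign $s\in\{\pm1\}$. The sign enters only through a global relation between $f$ and $g$. To detect $s$, the algorithm has to find a ``collision'': queries $(x,y)$ to $f$ and $(a,b)$ to $g$ with $\pi(y)=a$, or more generally a small set of queries that together determine a consistent linear relation involving $s$.

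\textbf{Simulation and collisions.} I will show that, conditioned on no collision among the queried points, the answers look like independent uniform bits under both distributions. The random $h$ masks every individual value, and the sign only appears in products across colliding points. Hence the total variation distance between the transcript distributions is at most $\Pr[\text{collision}]$.

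With a single permutation this probability is about $q^2/2^{n/2}$, which only gives $2^{n/4}$, essentially Girish--Servedio. To reach $2^{n/2}$, I will instead split $\field_2^n$ into $k$ blocks and compose the construction recursively, or use a $k$-fold product structure. The aim is that revealing $s$ requires a $k$-wise collision among $q$ random values in a domain of size $2^{n}$. The probability of that is roughly $q^k/2^{(k-1)n}\cdot$poly. For large constant $k$ this is $o(1)$ whenever $q\le 2^{(1-1/k)n/2}$, which accounts for the $0.4999$ exponent.

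\textbf{Main obstacle.} The hard step is the adaptive $k$-collision bound. An adaptive algorithm chooses each query based on earlier answers, and partial collisions (chains of length below $k$) may leak information that steers later queries. I will handle this with a lazy-sampling argument. The hidden permutations are sampled on demand, and by induction I will show that before the first full $k$-chain forms, each new query extends an existing chain with probability at most (number of open chain ends)$/2^{n/2}$, uniformly over the history. The number of length-$j$ chains is then bounded by about $q^{j}/2^{(j-1)n/2}$ via a union bound over query-tuples. The delicate part is checking that answers remain uniform and independent of $s$ until a full chain exists, so that this per-step probability holds conditionally. Summing over $j\le k$ then gives the bound.
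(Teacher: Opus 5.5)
There is a genuine gap, and it is in the step that carries the whole theorem. You never produce a bent family in which learning the sign provably requires $k$ independent coincidences, and the family you do write down is not hard at all. Your overall skeleton matches the paper's: two distributions with $g=\pm$ the dual, a reduction to a collision event, then a $k$-wise balls-and-bins bound over roughly $2^{n/2}$ buckets.

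The construction is broken outright. Take $f(x,y)=(-1)^{\langle x,\pi(y)\rangle+h(y)}$ with no hiding linear map. Fix $y=0\in\field_2^{n/2}$ and query $f(0,0)$ and $f(e_j,0)$ for $j\le n/2$. The products $f(e_j,0)f(0,0)=(-1)^{\pi(0)_j}$ reveal $a_0=\pi(0)$. Then $g(a_0,0)=s\,(-1)^{h(0)}$, together with $f(0,0)=(-1)^{h(0)}$, gives $s$ after $n/2+2$ queries.

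So your claim that, absent collisions, the answers are independent uniform bits is false. The value $h(y)$ is shared by the whole coset $\{(x,y):x\in\field_2^{n/2}\}$, and $f$ is affine on it. A collision is also not a chance event: the adversary lands in bucket $\pi(y)$ deterministically just by reusing $y$. The same point breaks your lazy-sampling bound of (open chain ends)$/2^{n/2}$ per query, uniformly over histories. Whenever the hidden objects are linear, the adversary controls membership of spans. This is why the paper counts only linearly independent vectors per subspace and randomizes the subspaces themselves. Composing with a random invertible linear map, as Girish and Servedio do, repairs this only up to about $2^{n/4}$. The paper remarks that those instances can in fact be distinguished with about that many queries.

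The amplification to $k$-collisions is asserted rather than constructed. A ``$k$-fold product'' (direct sum) of Maiorana--McFarland functions is again Maiorana--McFarland, with permutation $\pi_1\times\cdots\times\pi_k$ and $h=\sum_i h_i$, so it inherits the same weakness. ``Compose recursively'' is not specified, and composition does not preserve bentness in general. Your exponents also disagree: $q^k/2^{(k-1)n}$ corresponds to $2^n$ buckets and a threshold $2^{(1-1/k)n}$, not $2^{(1-1/k)n/2}$.

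The missing idea is the paper's choice of family: Dillon-type partial spread functions.
\begin{itemize}
\item $\field_2^n\setminus\{0\}$ is partitioned into $2^m+1$ subspaces $E_i$ of dimension $m=n/2$.
\item $f$ equals $-1$ exactly on a random translate of $\bigcup_{i\in D}E_i\setminus\{0\}$.
\item $2^m\hat f$ is, up to the character $(-1)^{\langle a,y\rangle}$, the analogous function on the dual spread $\{E_i^\perp\}$.
\item The two instances are $(f,2^m\hat f)$ and $(-f,2^m\hat f)$.
\end{itemize}
Each $E_i$ is pinned down only by $m$ independent vectors. The paper argues that swapping the $D/\overline{D}$ labels of touched and untouched subspaces keeps any transcript consistent with both instances, as long as it has fewer than $k$ independent vectors in every $E_i$ and $E_i^\perp$. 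A subspace-counting check shows that many spreads remain. The orthogonality coupling between $S$ and $S^\perp$ inflates bucket probabilities by at most a factor $2^{k-1}$. This gives $\Pr[k\text{-collision}]\le\binom{\ell}{k}\max_i\|\mathcal{P}_i\|_k^k=O(\ell^k/2^{m(k-1)})$, hence $\ell=\Omega(2^{m(k-1)/k})$. Without a family of this kind, your collision analysis has nothing to apply to.
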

Girish and Servedio~\cite{DBLP:journals/eccc/GirishS25} conjectured this problem to have a tight lower bound of $\Omega(2^{n/2})$: while we do not fully close the gap, we significantly improve on their $\widetilde\Omega(2^{n/4})$ query complexity lower bound, and in fact get arbitrarily close to the conjectured tight bound, up to an $o(1)$ factor in the exponent.

We then show how similar ideas used in obtaining this bound can be easily extended to obtaining a tight lower bound for the Generalized Simon's Problem (defined in \cref{sec:gsp_proof}). At a high level, here one is given a function $f\colon\field_p^n\rightarrow X$, which has a hidden subgroup $S$ of dimension $k$ embedded in it. The task is to identify this subgroup with as few queries as possible. We prove a tight lower bound for the query complexity:
\begin{theorem}
    \label{thm:generalized:simons}
    Any query algorithm for the Generalized Simon's Problem requires at least $\Omega(k,\sqrt{kp^{n-k}})$ queries. Moreover, in view of the known upper bound, this is optimal.
\end{theorem}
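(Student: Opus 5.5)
We prove the lower bound $\Omega(\max(k,\sqrt{k p^{n-k}}))$ in two parts, one for each term, and then cite the known upper bound for optimality. We work against a randomized algorithm and use Yao's principle. The hard distribution picks a uniformly random $k$-dimensional subspace $S\le\field_p^n$. The function $f$ is then a uniformly random injection from $\field_p^n/S$ into $X$, so $f$ is constant on cosets of $S$ and distinct across cosets. Each answer $f(x)$ tells the algorithm only which earlier queries it collides with. Until a collision occurs, the answers are fresh uniformly random labels that carry no information about $S$.

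\textbf{The $\sqrt{kp^{n-k}}$ term (main step).} Fix the first $t$ queries $x_1,\dots,x_t$; we may take them adaptively chosen. Conditioned on no collision so far, the posterior on $S$ is uniform over all $k$-dimensional subspaces that avoid the set of differences $D=\{x_i-x_j\}$. A collision at step $t$ means $x_t-x_j\in S$ for some $j<t$. For a fixed nonzero $v$,
\[
\Pr[v\in S]=\frac{p^k-1}{p^n-1}\approx p^{k-n}.
\]
Conditioning on avoiding $D$ raises this probability by at most a factor $1+o(1)$, as long as $|D|\,p^{k-n}=o(1)$. By a union bound, step $t$ produces a collision with probability at most $(t-1)p^{k-n}(1+o(1))$. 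Summing over steps, the probability of any collision within $T$ queries is $O(T^2p^{k-n})$.

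This bound alone gives $\Omega(\sqrt{p^{n-k}})$, but the target has an extra factor $\sqrt k$. To get it, we use the fact that finding a single collision only reveals one nonzero vector of $S$, and identifying $S$ requires $k$ linearly independent ones. Consider the stage where the algorithm already knows a subspace $S_i\subseteq S$ of dimension $i$. It can quotient by $S_i$ and play the same game in $\field_p^n/S_i$, where the unknown $S/S_i$ is uniform of dimension $k-i$. The chance that a pair yields a new direction is about $p^{k-i}/p^{n-i}=p^{k-n}$ per pair, the same rate as before. So each new independent vector requires a fresh birthday event, and collisions within known cosets give nothing new.

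The total number of useful collisions among $T$ queries is then dominated by a count with expectation $O(T^2p^{k-n})$. Identifying $S$ with constant probability requires this count to be at least about $k$ (or at least $k-O(1)$). Using Markov's inequality, or a martingale concentration bound over the stages, this forces $T^2p^{k-n}=\Omega(k)$, that is, $T=\Omega(\sqrt{kp^{n-k}})$.

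\textbf{The main obstacle} is making this stage argument rigorous under adaptivity and conditioning. We must show that, given all information so far (the known $S_i$ and the non-collision constraints), each new query pair creates a new direction with probability $O(p^{k-n})$. The route we would try first is to argue the following. Conditioned on $S_i$ and on the absence of further collisions, $S/S_i$ is uniform among subspaces avoiding a set of at most $T^2$ forbidden vectors. Uniformity of random subspaces under $GL$-symmetry then controls the probability that any particular vector lies in the subspace, giving the required per-pair bound. This requires the condition $T^2p^{k-n}\ll 1$ from the regime where $T$ is small, which is exactly the regime relevant for the lower bound. If an exact $\sqrt{k}$ does not survive the argument, a constant-fraction version (collecting $\Omega(k)$ independent vectors) still suffices.

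\textbf{The $k$ term.} The answer to the problem is a $k$-dimensional subspace, so it carries about $k(n-k)\log p$ bits of entropy. If $T\le k/2$ queries are made, there are at most $\binom{T}{2}$ differences, but these are supported on a span of dimension less than $k$. Hence at least one direction of $S$ is never certified, and the algorithm succeeds with probability at most $1/p+o(1)$. A cleaner way to say this: all the information about $S$ comes from collisions, and collisions only reveal vectors in $\operatorname{span}\{x_i-x_1\}$, which has dimension at most $T-1$. So $T\ge k+1$ is necessary. Combining the two bounds gives $\Omega(\max(k,\sqrt{kp^{n-k}}))$. Optimality follows from the known upper bound of Ye et al.\ (with a classical birthday-style algorithm achieving $O(\max(k,\sqrt{kp^{n-k}}))$).
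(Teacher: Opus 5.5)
Your architecture is the paper's. You use a uniformly random $k$-dimensional $S$ with a random labelling of its cosets. You bound the per-pair collision rate by about $p^{k-n}$, so $O(T^2p^{k-n})$ collisions are expected. You apply Markov's inequality against the $k$ collisions needed to span $S$. Finally, you count the subgroups still consistent when fewer collisions occur. Two of your choices are improvements. Counting only \emph{useful} collisions is more careful than the paper's raw pair count, which an adaptive algorithm can inflate for free by querying $x_j+s$ for an already known $s\in S$. Your direct argument for the $k$ term also replaces the paper's citation of Ye et al.

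The gap is in the step you yourself flag, and your proposed fix fails exactly where the $\sqrt{k}$ lives. You bound $\Pr[v\in S\mid S\cap D=\emptyset]\le \Pr[v\in S]/\Pr[S\cap D=\emptyset]$ using $\Pr[S\cap D=\emptyset]\ge 1-|D|p^{k-n}$. That controls things only while $T^2p^{k-n}\ll 1$, i.e.\ $T\ll\sqrt{p^{n-k}}$. To beat Ye et al.'s $\Omega(\sqrt{p^{n-k}})$ you must control the process up to $T=c\sqrt{kp^{n-k}}$. There $|D|$ can be about $c^2kp^{n-k}/2$, and a uniform $S$ (or $S/S_i$) contains about $c^2k/2$ elements of $D$ in expectation. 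The avoidance probability is then far from $1$, and the ratio bound says nothing about how much the posterior may favour a cleverly chosen $v$.

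So your claim that $T^2p^{k-n}\ll 1$ ``is exactly the regime relevant for the lower bound'' holds only for $k=O(1)$. In that regime $\sqrt{k}$ is a constant and your first-collision bound already gives the theorem. Your implicit residual count breaks in the same regime. A $(k-1)$-dimensional learned subspace has only about $p^{n-k}$ completions, while up to $c^2kp^{n-k}/2$ excluded differences can each kill one of them. The greedy count for smaller learned dimensions has the same vacuous last factor.

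The paper does not close this either. It treats $k$ as a constant and simply asserts that every query's bucket distribution may be replaced by the uniform one. So for constant $k$, with $O(1)$ rather than $1+o(1)$ conditioning factors, your argument is as complete as the paper's.

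For growing $k$, where the $\sqrt{k}$ is the whole improvement, two ingredients are missing. First, you need a bound of order $t\,p^{k-n}$ on the probability of a new independent collision at step $t$ that does not degrade with $|D|p^{k-n}$, holding either for every history or on average over histories. Second, you need a lower bound on the number of consistent completions that survives $|D|p^{k-n}\asymp k$.
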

This resolves an open question of Ye~\etal~\cite{DBLP:journals/iandc/YeHLW21}, who established the upper bound but only showed an $\Omega(k,\sqrt{p^{n-k}})$ query complexity lower bound.
\subsection{Our Techniques}
In order to obtain our almost tight bound for extremal Forrelation, our starting point is the observation by Girish and Servedio that bent Boolean functions can be used to construct hard instances, allowing one to circumvent the main obstacle in previous methods, which proceeded by designing real-valued functions before carefully ``rounding them up'' to be Boolean-valued. This process was a key limiting factor in obtaining lower bounds for the extremal version of the Forrelation problem, as the rounding step prevented all previous constructions from achieving perfect (anti)-correlation between the resulting Boolean functions impossible. In contrast, by leveraging the properties of bent functions, Girish and Servedio \emph{directly} started with perfectly (anti)correlated Boolean functions, isolating the challenging part of the argument ``only'' in showing indistinguishability.

More specifically, if $f\colon \bool^n\to\bool$ is a bent Boolean function, then its Fourier transform $\hat{f}$ satisfies
\[
    \hat{f}(y) = \pm 2^{n/2}
\]
for all $y\in \bool^n$. In particular, the function $g$ defined by $g(x) = 2^{n/2}\hat{f}(x)$ for all $x$ \emph{is already a Boolean function}, perfectly correlated with $\hat{f}$. Thus, by choosing a suitable class $\mathcal{C}$ of bent Boolean functions and setting the instance to be either $(f, 2^{n/2}\hat{f})$ or $(f, -2^{n/2}\hat{f})$, one obtains valid \yes- and \no-instances of the extremal Forrelation problem, out-of-the-box: ``all that remains'' is to show that they are classically hard to distinguish.

To do so, Girish and Servedio consider a class of bent functions known as the \emph{Maiorana--McFarland family}, from which they define their class $\mathcal{C}$ by composing the corresponding functions with a random (invertible) linear transformation of $\field_2^n$. This last step, which preserves bentness, is necessary to ``hide'' the underlying structure of the functions and to argue indistinguishability, and results in \yes-instances $(f,g)$ of the form
\[
f(x) = (-1)^{\langle A_1x, A_2x\rangle + h(A_2x)}\,,\qquad 
g(x) = (-1)^{\langle B_1x, B_2x\rangle + h(B_1x)}
\]
and \no-instances of the form
\[
f(x) = (-1)^{h(A_2x)}\,,\qquad 
g(x) = (-1)^{h(B_1x)}
\]
for $x\in \field_2^n$, where $h\colon \field_2^{n/2}\to\field_2$ is a randomly chosen Boolean function, $A = (A_1,A_2) \in \field_2^n$ is an invertible linear transformation split in two matrices $A_1,A_,2 \in \field_2^{(n/2)\times n}$, and similarly for $B=(B_1,B_2) = (A^\top)^{-1}$.

At a very high-level, the crux of their indistinguishably proof boils down to a balls-and-bin argument, whereby they argue that, when making $\ell$ queries $x_1,\dots, x_\ell$, unless a collision among the points
\[
A_2x_1,\dots, A_2x_\ell, \dots, B_1x_1,\dots, B_1x_\ell
\]
is observed, the adversary cannot distinguish between \yes- and \no-instances. Now, by a birthday-paradox-type argument and sweeping under the rug most of the subtleties of the analysis, since the kernel of both $A_2$ and $B_1$ has dimension $n/2$, at least $\Omega(\sqrt{2^{n/2}}) = \Omega(2^{n/4})$ are necessary for this to happen, yielding the lower bound. Importantly, this argument stops at \emph{$2$-collisions}: a natural question is whether, by extending it to $k$-collisions, for $k \gg 2$, one could improve the lower bound, maybe even to $\tilde{\Omega}(2^{n/2})$.

Unfortunately, one cannot: this particular family of bent functions for the construction of the hard instances appears to be unable to go beyond this $\Omega(2^{n/4})$ lower bound, and, indeed, the \yes- and \no-instances of~\cite{DBLP:journals/eccc/GirishS25} \emph{can} be distinguished given that many classical queries.

\paragraph{Our approach.} To circumvent this limitation, we depart from the construction of~\cite{DBLP:journals/eccc/GirishS25} and instead build our hard instances from another class $\mathcal{C}$ of bent functions, namely, the \emph{partial spread} functions (first introduced by Dillon~\cite{dillon1974elementary}, and formally defined in \cref{sec:nonadaptive_2_collision_extremal_Forrelation}). At a very high level, this family of bent function relies on dividing $\field_2^n$ into $2^{n/2}+1$ subspaces $E_1,\dots, E_{2^{n/2}+1}$, each of dimension $n/2$ (and whose pairwise intersection is the singleton $\{\zero\}$), in selecting (roughly) half of the $E_i$'s into a family of subspace $D$, and adding the orthogonal complements $E_i^\perp$ of the remaining $2^{n/2}/2+1$ subspaces to a family $\overline{D}$. Then the corresponding bent function $f$, parameterized by $D,\overline{D}$, is defined such that $f(\zero) = 1$, and
\[
f(x) = \begin{cases}
    -1 & \text{if } x \in \bigcup_{i\in D} E_i\setminus \{\zero\} \\
    1 & \text{otherwise}
\end{cases}
\]
One can check that this, indeed, defines a bent Boolean function, and further that its Fourier transform admits a convenient expression again as a function of $D,\overline{D}$ (cf. the proof of~\cref{lemma:translation_preserves_bentness}). Moreover, one can verify that this class $\mathcal{C}$ contains \emph{many} distinct bent functions, a prerequisite for any hope of lower bound.

The main advantage in using these {partial spread} functions is that their linear algebraic structure is quite rich, which, when considering the analogous indistinguishability argument in the course of our lower bound, more naturally brings $k$-collisions for large $k$. The intuition is that each subspace $E_i$ is only fully defined by $n/2$ linearly independent vectors: and so one can hope that, \emph{unless $k=\Omega(n)$ ``collisions'' in some $E_i$ are observed} (for some suitable definition of ``collision'') the algorithm does not have enough information to infer anything useful about the structure of the underlying function.

This is indeed at a high level how our lower bound proceeds, which a swapping argument showing that any transcript, until sufficiently many collisions are observed, is equally consistent with \yes- and \no-instances. However, in doing so, two complications arise. First, the collision analysis becomes more involved: instead of being able to default to uniform balls and bins, the probability distributions induced by the algorithm's queries is no longer uniform, as we must handle the various linear algebraic constraints the functions must satisfy between queries. Hence, we need to carefully analyze these distributions, and then rely on a suitable nonuniform balls-and-bins analysis. 

Secondly, as we aim to allow for more collisions, it becomes more intricate to lower bound the number of bent Boolean functions that we have at each step (i.e., how many remain consistent with the queries seen to far). This is important, because if the number of bent Boolean functions become too small due to the linear constraints introduced by the queries, the adversary could resort to a brute force strategy to identify the queried function in potentially only a constant number of extra queries. To alleviate this problem, we use subspace counting techniques from linear algebra and finite geometry to get a handle on the number of functions. Yet, we emphasize that, unlike previous work on the general Forrelation problem, our techniques are similar to Girish and Servedio, in that we only use elementary probability and linear algebraic concepts, and do not rely on high-dimensional rounding.\smallskip

As an aside, we show how these ideas can be used to prove a tight query lower bound on the Generalized Simon's Problem. This is a much simpler paradigm, involving a simpler subspace counting problem, and only relying on counting (2-)collisions instead of higher values.

\subsection{Further Related Work}
For a more in depth coverage of the history of the Forrelation problem, we refer the reader to the (excellent) exposition by Girish and Servedio in \cite{DBLP:journals/eccc/GirishS25}. To paraphrase some highlights, Forrelation was used as a main ingredient to prove the oracle separation of \textbf{BQP} and \textbf{PH} or Raz and Tal~\cite{DBLP:journals/jacm/RazT22}. Furthermore, if we consider correlations bounded away from $\pm1$, tight separations have been proved, such as by Aaronson and Ambainis \cite{DBLP:journals/siamcomp/AaronsonA18}.

The Generalized Simon's Problem has also been studied in other contexts, such as in the distributed setting \cite{DBLP:journals/acta/LiQLM24}. Here, the authors were able to combine distributed computing with quantum techniques such as amplitude amplification to come up with an exact algorithm that also shows advantage in query complexity over the best classical distributed algorithms. 
\subsection{Outline}
In \cref{sec:preliminaries}, we provide preliminaries, as well as the formal definitions of the task we study in this paper. In~\cref{sec:nonadaptive_2_collision_extremal_Forrelation}, to illustrate the key ideas and set the scene, we (re)prove as a warmup that nonadaptive algorithms require at least $\Omega(2^{n/4}$) queries for the Forrelation problem. Despite this being weaker than the results by Girish and Servedio~\cite{DBLP:journals/eccc/GirishS25}, the main ideas for our general result come in this section. This includes the definitions required and the formal construction of the hard instances for the problem, as well as most of the main ideas driving the proof of the adaptive $\Omega(2^{\frac{n}{2}(1-o(1))})$ bound. The formal argument for how to extend these ideas to the full adaptive bound is given in \cref{sec:adaptive_k_collision_extremal_Forrelation}. %
We also show in~\cref{sec:gsp_proof} how to use similar ideas to provide a tight lower bound for the Generalized Simon's Problem.
\section{Preliminaries and Problem Definitions}\label{sec:preliminaries}
Throughout this paper, we use standard aymptotic notation, as well as the (slightly) less standard $\tilde{O}$, $\tilde{\Omega}$, which omit polylogarithmic factors in the argument. Unless explicitly detailed otherwise, we write ``with high probability'' to denote probability sufficiently close to one (up to a sufficiently small positive constant). %
\paragraph{Boolean functions} We refer to a Boolean function on $n$ variables, or simply a Boolean function, as a function $f:\bool^n\rightarrow\{-1,1\}$. While this definition is equivalent to the one mapping input strings into the codomain $\{0,1\}$, it is more convenient when analyzing certain properties like Fourier transforms; analogously, we will sometimes write the domain as $\field_2^n$. Depending on context, we will interchangeably refer to inputs $x\in\bool^n$ as either string or vectors, and denote the zero string $\zero$. For two strings $x,y\in\field_2^n$, we write $x\oplus y \in \field_2^n$ for their coordinate-wise addition modulo 2. We will heavily use the Fourier transform, defined as follows:
\begin{definition}[Fourier Transform]\label{def:Fourier_transform}
    Given a function $f\colon\bool^n\rightarrow\mathbb{R}$, its \emph{Fourier transform} is given by
    \begin{equation*}
        \hat{f}(y)=\frac{1}{2^n}\sum_{x\in \bool^n}f(x)(-1)^{\langle x,y\rangle}, \qquad y\in \bool^n.
    \end{equation*}
\end{definition}
We will also make use of a special type of Boolean function, namely \emph{bent} Boolean functions.
\begin{definition}[Bent Boolean Functions]\label{def:bent_boolean_function}
    A Boolean function $f$ on $n$ variables is said to be \emph{bent} if $|\hat{f}(y)|=2^{-n/2}$ for all $y\in\bool^n$.
\end{definition}
Note that by Parseval's Theorem, this is ``as uniform as possible'', as for a Boolean function $\sum_{y\in\bool^n} \hat{f}(y)^2 = \frac{1}{2^n}\sum_{x\in\bool^n} f(x)^2 = 1$. In other words, the Fourier transform of a bent Boolean function is maximally balanced. Due to the nice properties of their Fourier spectrum, our \yes- and \no-instances for the Forrelation problem will be based on bent functions. For more on Boolean functions, the reader is referred to~\cite{ODonnell2014} (e.g., Section~6.3 for bent functions).

\paragraph{Miscellaneous.} 
 One more tool we will require is the rearrangement inequality, which we recall below:
\begin{lemma}[Rearrangement Inequality for two sequences~\cite{day1972rearrangement}]\label{lemma:2_sequence_rearrangement}
    Let $0\leq x_1\leq x_2\leq\dots\leq x_n$ and $0\leq y_1\leq y_2\leq\dots\leq y_n$ be two partially ordered sequences, and $\sigma\in\mathcal{S}_n$ be any permutation. Then
    \begin{equation*}
        x_1y_{\sigma(1)}+x_2y_{\sigma(2)}+\cdots +x_ny_{\sigma(n)}\leq x_1y_1+x_2y_2+\cdots +x_ny_n.
    \end{equation*}
\end{lemma}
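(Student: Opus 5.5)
The plan is the classical exchange argument for two sequences, followed by induction on the number of inversions of $\sigma$. Call a pair of indices $i<j$ an \emph{inversion} of $\sigma$ if $\sigma(i)>\sigma(j)$. If $\sigma$ has no inversions, then $\sigma$ is the identity and the inequality holds with equality. The inductive claim is that whenever $\sigma$ has at least one inversion, we can produce a permutation $\sigma'$ with strictly fewer inversions, whose sum $S(\sigma')=\sum_k x_k y_{\sigma'(k)}$ satisfies $S(\sigma)\le S(\sigma')$. Since the number of inversions is a nonnegative integer bounded by $\binom{n}{2}$, iterating this step reaches the identity in finitely many steps, and chaining the inequalities gives $S(\sigma)\le S(\mathrm{id})=\sum_k x_ky_k$.

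For the exchange step, take an inversion that is adjacent in position, i.e.\ some $i$ with $\sigma(i)>\sigma(i+1)$. Such an $i$ exists whenever $\sigma\neq\mathrm{id}$, because otherwise $\sigma$ would be increasing and hence the identity. Let $\sigma'=\sigma\circ(i\ i{+}1)$, so $\sigma'$ swaps the values of $\sigma$ at positions $i$ and $i+1$. Then $\sigma'$ has exactly one fewer inversion: only the relative order of the pair $(i,i+1)$ changes, while every pair involving another index keeps its status. Writing $a=\sigma(i+1)<b=\sigma(i)$, the two sums differ only in two terms:
\[
S(\sigma')-S(\sigma)=x_iy_a+x_{i+1}y_b-x_iy_b-x_{i+1}y_a=(x_{i+1}-x_i)(y_b-y_a)\ge 0.
\]
This is nonnegative because the $x$'s are nondecreasing in the index and the $y$'s are as well, with $b>a$.

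The only point needing care is the bookkeeping claim that an adjacent transposition changes the inversion count by exactly one. Any strict decrease would suffice, and alternatively one could induct on the number of fixed points or on $n$ by moving the largest value $y_n$ to position $n$. The nonnegativity hypothesis is not actually needed for this argument; only the monotonicity of both sequences is used.
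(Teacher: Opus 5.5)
The paper does not prove this lemma. It imports it from~\cite{day1972rearrangement}, so there is no internal argument to compare against, and your proposal supplies a self-contained proof in place of the citation.

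Your proof is correct. It is the standard adjacent-transposition exchange argument:
\begin{itemize}
\item an adjacent inversion exists whenever $\sigma\neq\mathrm{id}$;
\item swapping it lowers the inversion count by exactly one, since the pairs $(k,i)$ and $(k,i+1)$ merely trade inversion status;
\item the sum changes by $(x_{i+1}-x_i)(y_b-y_a)\ge 0$.
\end{itemize}

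Compared with the citation, your route gives an elementary proof. It also justifies your correct observation that nonnegativity is superfluous for two sequences.

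That hypothesis is not superfluous for the multi-sequence version the paper also uses (Lemma~\ref{lemma:generalized_rearrangement_inequality}). Take three copies of $(-1,0,0)$: the aligned sum is $(-1)^3=-1$, while swapping the first two entries of one copy makes every column product $0$. So any extension of your exchange idea to several sequences must use nonnegativity somewhere, for instance to make products of the remaining sequences monotone. In the paper's applications every entry is a probability, so this causes no problem there.
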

We also make use of the generalization to an arbitrary number of sequences:
\begin{lemma}[Generalized Rearrangement Inequality~\cite{Wu_2022}]\label{lemma:generalized_rearrangement_inequality}
    For $n$ sequences $x^1,\dots,x^n$, where for each $x^i$, $0\leq x^i_1\leq x^i_2\leq\dots\leq x^i_n$, and for $n-1$ permutations $\sigma_i$, we have that
    \begin{equation*}
        x^1_1\cdot x^2_{\sigma_1(1)}\cdot\dots\cdot x^n_{\sigma_{n-1}(1)}+\dots+ x^1_n\cdot x^2_{\sigma_{(n)}}\cdot\dots\cdot x^n_{\sigma_{n-1}(n)}\leq x^1_1\cdot x^2_1\cdot\dots\cdot x_n^1+\dots+x_n^1\cdot x_n^2\cdot\dots\cdot  x_n^n.
    \end{equation*}
\end{lemma}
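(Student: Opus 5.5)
The plan is to prove the statement by induction on the number $m$ of sequences. To make the induction go through, I will treat the number of sequences $m$ and their common length $N$ as independent parameters. The lemma as written takes $m=N=n$, but the argument does not use this. Write $S(\sigma_1,\dots,\sigma_{m-1})=\sum_{j=1}^N x^1_j\prod_{i=1}^{m-1}x^{i+1}_{\sigma_i(j)}$, and let $u_k=\prod_{i=1}^m x^i_k$ denote the ``sorted'' products. The claim $P(m)$ is that $S\le\sum_k u_k$ for every choice of $N$, every family of $m$ nondecreasing nonnegative sequences of length $N$, and all permutations. The base case $m=2$ is exactly \cref{lemma:2_sequence_rearrangement}, and $m=1$ is trivial.

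For the inductive step, set $w_j=x^1_j\prod_{i=1}^{m-2}x^{i+1}_{\sigma_i(j)}$, so that $S=\sum_j w_j\,x^m_{\sigma_{m-1}(j)}$. Let $w_{(1)}\le\dots\le w_{(N)}$ be the $w_j$ sorted, and let $y_k=x^m_k$. Applying \cref{lemma:2_sequence_rearrangement} to the sorted $w$'s and the $y$'s gives $S\le\sum_k w_{(k)}y_k$. It remains to compare this with $\sum_k u_k y_k$, where $u_k=v_k y_k$ with $v_k=\prod_{i=1}^{m-1}x^i_k$. Abel summation with $y_0:=0$ gives
\[
\sum_k w_{(k)}y_k=\sum_{t=1}^N (y_t-y_{t-1})\sum_{k\ge t}w_{(k)},
\]
and the same identity holds with $v$ in place of $w$. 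Since $y$ is nondecreasing and nonnegative, every increment $y_t-y_{t-1}$ is nonnegative. So it suffices to prove a weak majorization: for every $r$, the sum of the $r$ largest $w_j$ is at most $\sum_{k>N-r}v_k$.

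I expect this majorization to be the only real obstacle, and I would handle it with $P(m-1)$ applied to truncated sequences. Let $J$ be the set of $r$ indices carrying the largest $w_j$. For each of the $m-1$ factor sequences, the positions used by $j\in J$ (namely $j$ itself for $x^1$ and $\sigma_i(j)$ for $x^{i+1}$) form $r$ distinct indices. By monotonicity and nonnegativity, the $\ell$-th smallest used index is at most $N-r+\ell$. Replacing each used entry by the entry at that rank among the top $r$ positions $\{N-r+1,\dots,N\}$ can therefore only increase each factor, and hence each product. The result is a sum of $r$ products in which each of $m-1$ sorted nonnegative sequences of length $r$ (the top-$r$ truncations) appears under some permutation. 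By $P(m-1)$ with $N$ replaced by $r$, this sum is at most $\sum_{k>N-r}v_k$.

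Putting the pieces together gives $S\le\sum_k v_k y_k=\sum_k u_k$, which completes the induction. The care needed is in setting up the relabeling so that the entries of each sequence are matched rank-to-rank before $P(m-1)$ is invoked. Proving the statement for general lengths $N$, rather than only $N=n$, is what allows $P(m-1)$ to be applied at length $r$.
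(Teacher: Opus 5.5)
The paper gives no proof of this lemma; it is quoted from \cite{Wu_2022}. So there is no in-paper argument to compare with, and your proof stands on its own. It is correct and complete.

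\begin{itemize}
\item The reduction $S\le\sum_k w_{(k)}y_k$ is valid. It applies \cref{lemma:2_sequence_rearrangement} after composing $\sigma_{m-1}$ with a sorting permutation of the $w$'s.
\item The Abel summation is correct. With $y_0=0$, the first increment $y_1\ge 0$ uses nonnegativity. This reduces everything to the weak majorization $\sum_{k>N-r}w_{(k)}\le\sum_{k>N-r}v_k$.
\item The majorization goes through as you describe. Distinctness of the $r$ used positions in each factor forces the $\ell$-th smallest to be at most $N-r+\ell$. Monotonicity and nonnegativity then let you raise every factor to the top-$r$ block. Relabeling each factor's used positions by rank turns the bound into an instance of $P(m-1)$ at length $r$.
\end{itemize}

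Decoupling the number of sequences from their length is what lets the induction close. It is also what the paper actually needs. In the proof of \cref{thm:adaptive_k_collisions}, the lemma is applied to $k$ sequences of length $2^m+1$. The statement as written, with $n$ sequences of length $n$ (and index typos in the displayed sums), does not literally cover that case.

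For comparison, a shorter standard route avoids induction altogether. Write each sequence as a nonnegative combination $x^i=\sum_t (x^i_t-x^i_{t-1})\mathbf{1}_{\{t,\dots,N\}}$ with $x^i_0=0$, and expand both sides multilinearly. Each resulting term is a nonnegative multiple of a count $\bigl|\bigcap_i \pi_i^{-1}(\{t_i,\dots,N\})\bigr|\le\min_i(N-t_i+1)$, with equality when all permutations are the identity.

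Your argument is longer, but it buys two things. It proves the stronger intermediate fact that the partial products $w_j$ are weakly majorized by the sorted products $v_k$. And it uses only \cref{lemma:2_sequence_rearrangement}, which the paper already states.
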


\subsection{The Extremal Forrelation Problem}\label{sec:extremal_Forrelation}
The problem we study is the extremal Forrelation problem.\footnote{As a technicality, this problem would be referred to as the extremal $2$-Forrelation problem by the original definition given by Aaronson and Ambainis \cite{DBLP:journals/siamcomp/AaronsonA18}. However, we do not deal with the more general $k$-Forrelation problem, and thus omit the $2$ from now on.} In view of defining it, we first introduce the Forrelation function:
\begin{definition}[Forrelation]\label{def:Forrelation}
    Given two Boolean functions $f,g:\field_2^n\rightarrow\{-1,1\}$, their \emph{Forrelation} is given by
    \begin{equation*}
        \forr(f,g)=2^{-n/2}\sum_{x}\hat{f}(x)g(x).
    \end{equation*}
    Note that, by Cauchy--Schwarz and Parseval, we have $\forr(f,g)\in[-1,1]$ for any two Boolean functions $f,g$.
\end{definition}
    One can think of the $\forr(f,g)$ as taking the correlation of $g$ with the Fourier transform of $f$. Then, the (extremal) Forrelation problem is defined as follows
    \begin{definition}[(Extremal) Forrelation Problem]\label{def:extremal_Forrelation_problem}
        Let $0<\varepsilon\leq 1$ be a fixed constant. Then, an adversary is given query access to the functions $f$ and $g$, with the promise that either:
        \begin{enumerate}
            \item $\forr(f,g)\geq \varepsilon$, known as a \yes-instance, or
            \item $\forr(f,g)\leq-\varepsilon$, known as a \no-instance.
        \end{enumerate}
        The \emph{Forrelation problem} asks what the minimum amount of queries needed is for an adversary to succeed at distinguishing the \yes-instance from the \no-instance with high probability. When $\varepsilon=1$, we refer to the task as the \emph{extremal} Forrelation problem, or simply (extremal) Forrelation.
    \end{definition}
\subsection{Subspace counting and Gaussian binomials}
A key ingredient in our analysis of both of the above problems is the task of \emph{subspace counting}, whose most basic form of the task can be phrased as follows: given a field $\field_p^n$, how many subspaces of dimension $k$ exist? This is captured by the Gaussian binomial, defined as follows.
\begin{definition}[Gaussian Binomial]\label{def:gaussian_binomial}
    The \emph{Gaussian binomial} is the count of the number of dimension $k$ subspaces of a field $\field_p^n$, given by
    \begin{align*}
        {n\choose k}_p&=\frac{(p^n-1)(p^n-p)\cdots(p^n-p^{k-1})}{(p^k-1)(p^k-p)\cdots(p^k-p^{k-1})}
        =\frac{(p^n-1)(p^{n-1}-1)\cdots(p^{n-k+1}-1)}{(p^k-1)(p^{k-1}-1)\cdots(p-1)}.
    \end{align*}
\end{definition}
To see how this is equivalent to counting subspaces, notice that the numerator counts the number of ordered subspaces of dimension $k$. This task is equivalent to choosing $k$ linearly independent spanning vectors. For the first choice, we can choose any nonzero vector, which gives $p^n-1$ choices. For the second, we cannot pick any vector in the span of the already chosen vector, which means we have $p^n-p$ vectors; etc. The denominator then similarly counts how many ways there are to order a $k$-dimensional subspace: dividing by this gives the count of unordered subspaces.

We will also, when necessary and on a case-by-case basis, require more advanced counts, where certain vectors apart from the spanning vectors are also forbidden.
\subsection{Balls and bins analysis}
The second main tool we use in our analysis is non-uniform balls and bins, whereby one tosses independently $\ell$ balls into $N$ bins, according to a probability distribution $\mathcal{D}$ over $\{1,2,\dots,N\}$. Then, by, e.g., \cite[Theorem 1]{schultegeers2022ballsbinssimple}, we have the following bounds on the maximum load:
\begin{lemma}\label{lemma:nonuniform_balls_and_bins}
    Let $B_\ell$ be the maximum load (i.e., maximum number of balls in one bin) after $\ell$ balls are independently tossed into $N$ bins according to probability distribution $\mathcal{D}$, and $k\geq 0$ be any integer. Then
    \begin{equation*}
        \Pr[B_\ell\geq k]\leq {\ell\choose k}\|\mathcal{D}\|_k^k.
    \end{equation*}
\end{lemma}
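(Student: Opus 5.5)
The natural route is a union bound over the possible $k$-sets of balls. Write $\mathcal{D}=(q_1,\dots,q_N)$. The event $B_\ell\geq k$ holds exactly when some bin $j\in\{1,\dots,N\}$ receives at least $k$ balls. That in turn holds exactly when there is a subset $T\subseteq\{1,\dots,\ell\}$ of size $|T|=k$ with every ball in $T$ landing in bin $j$. So
\[
\{B_\ell\geq k\}=\bigcup_{j=1}^N\ \bigcup_{T\subseteq[\ell],\,|T|=k}\{\text{all balls in }T\text{ land in bin }j\}.
\]
The plan is to apply the union bound to this double union and then compute each term using independence.

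For a fixed pair $(j,T)$, the balls are tossed independently according to $\mathcal{D}$, so the probability that all $k$ balls of $T$ land in bin $j$ is $q_j^k$. The union bound then gives
\[
\Pr[B_\ell\geq k]\leq\sum_{j=1}^N\sum_{|T|=k}q_j^k={\ell\choose k}\sum_{j=1}^N q_j^k={\ell\choose k}\|\mathcal{D}\|_k^k ,
\]
which is the claimed bound.

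The case $k=0$ is trivial: the right-hand side is ${\ell\choose 0}\sum_j q_j^0=N\geq1$, so the bound holds automatically. (Here $0^0=1$, or one restricts to bins in the support of $\mathcal{D}$, which only makes the sum smaller.) Similarly, if $k>\ell$ then ${\ell\choose k}=0$, and the left-hand side is also $0$, since at most $\ell$ balls exist.

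I do not expect a real obstacle. The only points needing care are the two just mentioned. First, the union ranges over subsets $T$ rather than ordered tuples, which is what produces ${\ell\choose k}$ instead of $\ell^k$. Second, independence is used only within a single fixed $T$, so no assumption on dependence across different events is needed, because the union bound does not require one.
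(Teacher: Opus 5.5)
Your union bound over pairs $(j,T)$ is correct and is essentially the paper's own argument. The paper takes the lemma from \cite[Theorem 1]{schultegeers2022ballsbinssimple}, but in the proof of \cref{thm:nonadaptive-2-collision} it runs the same first-moment argument: it lets $A_k$ count the $k$-subsets of balls landing in a common bin and bounds $\Pr[B_\ell\geq k]=\Pr[A_k\geq 1]\leq\mathbb{E}[A_k]$, which is your union bound in Markov form.
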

Notice that when $\mathcal{D}$ is the uniform distribution, and $k=2$, we recover the $\Omega(\sqrt{N})$ bound from the birthday paradox.
\section{A Nonadaptive $\Omega(2^{n/4})$ bound for Extremal Forrelation}\label{sec:nonadaptive_2_collision_extremal_Forrelation}
The main goal of this section is to prove the following result:
\begin{prop}\label{thm:2_collision_Forrelation}
    Any nonadaptive query algorithm requires at least $\ell=\Omega(2^{n/4})$ queries to succeed in distinguishing the two instances in the extremal Forrelation problem with high probability.
\end{prop}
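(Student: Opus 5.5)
We will construct a hard distribution over \yes- and \no-instances from partial spread bent functions and show that a nonadaptive algorithm making $\ell = o(2^{n/4})$ queries cannot tell them apart unless a collision event occurs, which happens with small probability.

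\textbf{Construction.} Fix a spread $E_1,\dots,E_{2^{n/2}+1}$ of $\field_2^n$: these are $n/2$-dimensional subspaces that pairwise intersect only in $\zero$ and together cover $\field_2^n$. Apply a uniformly random invertible linear map $A$ to hide it. Choose a uniformly random set $D$ of $2^{n/2}/2$ indices, and let $f = f_D$ be the function that is $-1$ on $\bigcup_{i\in D}E_i\setminus\{\zero\}$ and $+1$ elsewhere. One checks that $f$ is bent. Its normalized Fourier transform $g = 2^{n/2}\hat f$ is then again a partial spread function, built on the dual spread $\{E_i^\perp\}$ with the complementary index set $\overline D$. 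The \yes-instance is $(f,g)$ and the \no-instance is $(f,-g)$; the identity $\forr(f,\pm g)=\pm1$ follows from Parseval.

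\textbf{Reduction to collisions.} The queries are fixed in advance: points $x_1,\dots,x_\ell$ to $f$ and $y_1,\dots,y_\ell$ to $g$. Define the bad event $\mathcal{B}$ to hold if two distinct nonzero queried $f$-points lie in the same $E_i$, or two distinct nonzero queried $g$-points lie in the same $E_j^\perp$. Also include in $\mathcal{B}$ the case that an $f$-point's block $E_i$ and a $g$-point's block $E_j^\perp$ satisfy $j=i$, since then the corresponding answers are forced to be correlated.

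On $\neg\mathcal{B}$, every queried point lands in a distinct block, with $f$-blocks and $g$-blocks using disjoint indices. We then argue by a swapping argument: flipping membership of the relevant indices in $D$ gives a bijection between consistent \yes- and \no-configurations. Hence, conditioned on $\neg\mathcal{B}$ and on the block assignment, the answer vector is uniform over $\{\pm1\}^{2\ell}$ in both cases, up to a negligible correction from $D$ having fixed size. The correction can be handled by bounding the hypergeometric-versus-product distance, which is $O(\ell^2/2^{n/2})$. Consequently the total variation distance between the \yes and \no transcripts is at most $2\Pr[\mathcal{B}]+O(\ell^2 2^{-n/2})$.

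\textbf{Bounding $\Pr[\mathcal{B}]$, the main step.} We need $\Pr[\mathcal{B}] = O(\ell^2/2^{n/2})$, which then gives $\ell=\Omega(2^{n/4})$. Fix two distinct nonzero points $x,x'$. Because $A$ is uniformly random, $(Ax,Ax')$ is a uniformly random pair of distinct nonzero vectors if $x,x'$ are independent; if they coincide there is no collision to count. Two distinct nonzero vectors lie in the same spread element with probability
\[
\frac{2^{n/2}-2}{2^n-2}\approx 2^{-n/2}.
\]
A union bound over $O(\ell^2)$ pairs then suffices, so independence across pairs is not needed. This is a balls-and-bins bound with $k=2$, in the spirit of \cref{lemma:nonuniform_balls_and_bins}.

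The cross condition is subtler, since $Ax$ and $(A^{\top})^{-1}y$ are jointly distributed. We plan to analyze it by conditioning on $Ax$ and using the invariance of the map $y\mapsto (A^\top)^{-1}y$ under the stabilizer of $Ax$. Under this conditioning, the block of the $g$-point should still be nearly uniform, giving probability $O(2^{-n/2})$ per pair.

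We expect this cross-term, together with making the swapping bijection rigorous, to be the main obstacle. It depends on the exact duality $\hat f \leftrightarrow \overline D$, which we would verify via the lemma on bentness (\cref{lemma:translation_preserves_bentness}). If these steps go through, the transcripts are $o(1)$-close whenever $\ell=o(2^{n/4})$, which proves the proposition.
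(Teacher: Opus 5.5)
\textbf{Gap: your hard distribution has a one-query distinguisher.} No collision analysis can rescue the construction as stated. The map $A$ is linear, so it fixes $\zero$. For every $D$ and $A$, $f(\zero)=+1$ and exactly $2^{n/2-1}(2^{n/2}-1)$ points are mapped to $-1$. Hence
\[
g(\zero)=2^{n/2}\hat f(\zero)=2^{-n/2}\sum_{x}f(x)=2^{-n/2}\bigl(2^n-2^{n/2}(2^{n/2}-1)\bigr)=1
\]
deterministically. Querying the second function at $\zero$ therefore returns $+1$ on $(f,g)$ and $-1$ on $(f,-g)$. Your event $\mathcal{B}$ only involves nonzero points, and your claim that off $\mathcal{B}$ the answers are (nearly) uniform on $\{\pm1\}^{2\ell}$ fails exactly at this query. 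Negating $f$ instead would not help, since $f(\zero)=+1$ is equally forced. This is the issue raised in the two remarks following \cref{lemma:game_2_easier_than_game_1}.

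\textbf{The missing ingredient is to hide the unique point where the sign is forced.} The paper negates $f$, so the second function is identical in both instances. It also translates $f$ by a uniformly random nonzero offset $a$, i.e.\ it uses $x\mapsto f_D(x\oplus a)$. By \cref{lemma:translation_preserves_bentness}, the second function becomes $(-1)^{\langle a,y\rangle}2^{n/2}\hat f_D(y)$ and the Forrelation is still $\pm1$. The only distinguishing point is now $a$, which is hit with probability $O(\ell 2^{-n})$. Symmetrically, you could keep negating $g$ but multiply $f$ by a random character $(-1)^{\langle b,x\rangle}$, which moves the forced point of $g$ to a hidden $b$.

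\textbf{Once this is fixed, the rest of your plan is a valid and cleaner alternative route.} The paper builds sequential conditional distributions and applies the rearrangement inequality with weights in $\{0,p,2p\}$. Your random $A$ instead gives exact uniformity over pairs and a plain union bound. For the cross term, $\langle Au,(A^\top)^{-1}y\rangle=\langle u,y\rangle$, with $u=x$, or $u=x\oplus a$ once the offset is in place. So the cross event is impossible when this inner product is $1$. When it is $0$, the cross event has probability $(2^{n/2}-1)/(2^{n-1}-1)$, which is the paper's factor-$2$ phenomenon.

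\textbf{Minor correction.} By \cref{lemma:f_bent}, $2^{n/2}\hat f$ equals $-1$ on $\bigcup_{i\in D}E_i^\perp\setminus\{\zero\}$. It uses the same index set $D$, not $\overline{D}$.
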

While this is a strictly weaker result that that of Girish and Servedio~\cite{DBLP:journals/eccc/GirishS25}, it will outline the main technical ideas, which we will later extend to establish our main result,~\cref{thm:adaptive_two_collision}. Our starting point is an observation of Girish and Servedio, where they utilized bent Boolean functions to prove lower bounds. We will however rely on a different class of bent Boolean functions, which we introduce next, before formally defining our hard instances, and proceeding the bound of~\cref{thm:2_collision_Forrelation}. From here forth, we assume, without loss of generality, that $n$ is even, and let $m=n/2$.
\subsection{Spreads, dual spreads, and partial spreads}
In this section, we formally define some algebraic structures based on spreads, which will be the building blocks for the construction of our bent functions. %
\begin{definition}[{Spread (see, e.g., \cite[Chapter 4]{10.1093/oso/9780198502951.003.0004})}]
\label{def:spread}
    A collection of $2^m+1$ subspaces $\{E_1,\dots,E_{2^m+1}\}$ is called a \emph{spread} if it satisfies the 3 following conditions:
    \begin{enumerate}
        \item each $E_i$ is a subspace of $\field_{2}^{n}$;
        \item for any $i\neq j$, $E_i\cap E_j=\{\zero\}$; and
        \item $|E_i|=2^m$ for every $i$, or alternatively, each $E_i$ contains $2^{m}-1$ nonzero strings. Note that this implies $\dim(E_i)=m$.
    \end{enumerate}
    We usually denote a spread by $S$.
\end{definition}
In particular, every spread $S$ (equi)partitions $\field_{2}^{n}\setminus\{\zero\}$; that is, every nonzero vector $x\in\field_{2}^{n}$ lies in exactly one subspace of $S$. Given a spread, we can also define a corresponding \emph{dual spread}:
\begin{definition}[Dual Spread]\label{def:dual_spread}
    Given a spread $S$, the \emph{dual spread} $S^\perp$ is the collection of $2^m+1$ subspaces $\{E_1^\perp,\dots,E_{2^m+1}^\perp$\}, where $E_i^\perp$ is the unique orthogonal complement of the subspace $E_i$.
\end{definition}
It is easy to see that the dual spread also constitutes a valid spread. Indeed, each $E_i^\perp$, being the orthogonal complement of a subspace, is also trivially a subspace. Since each $E_i$ has dimension $m$, by the rank nullity theorem, $\dim(E_i^\perp)=n-m=m$, and thus $|E_i^\perp|=2^m$. Finally, assume we have a string $x\in E_i^\perp\cap E_j^\perp$, where $i\neq j$. Then, $x$ is perpendicular to the $m$ spanning vectors in $E_i$, and the $m$ spanning vectors in $E_j$. Since $E_i$ and $E_j$ only have trivial intersection, this means $x$ is perpendicular to a space spanned by $2m=n$ vectors. The only vector that satisfies this is the zero vector, and so $E_i\cap E_j=\{\zero\}$.

Finally, we define a partial spread, which will be used to build our bent Boolean function.
\begin{definition}[Partial Spread]\label{def:partial_spread}
    Let $S$ be a spread, and $D$ be a selection of $2^{m-1}$ subspaces in $S$. Let the remaining subspaces be denoted $\overline{D}$. This partitioning of $S$ into $D\cup\overline{D}$ is known as a \emph{partial spread}. Given a partial spread defined on $S$, we also equivalently define a dual partial spread $D^\perp$ and $\overline{D}^\perp$, by choosing the same indices in the respective dual spread $S^\perp$.
\end{definition}
By a slight abuse of notation, we will sometimes use $D$ and $\overline{D}$ to denote the indices of the subspaces in $S$, and sometimes to denote the actual subspaces in $S$ (and analogously for the dual spread).
\subsection{The Function and the hard distributions}
In this section, we introduce the family of bent Boolean functions used, as well as our hard instances of the Forrelation problem. The bent Boolean function we use is called the \emph{partial spread function}, defined as follows. 
\begin{definition}[Partial Spread Function]\label{def:partial_spread_function}
    Uniformly sample a spread $S$ and a nonzero vector $a\in \field_2^n$. Then define the function $q\colon\field_2^n\rightarrow \{0,1\}$ as follows: for $x\in \field_2^n$,
    \begin{equation*}
        q(x)=\begin{cases}
            0, & \text{if }x\oplus a = \zero,\\
            1, & \text{if }x\oplus a\in D, x\oplus a\neq \zero\\
            0, & \text{otherwise}\,,
        \end{cases}
    \end{equation*}
    where, by $x\in D$, we mean that there exists $i\in [2^m+1]$ such that $x\in E_i\in D$.
    Then, we define the \emph{partial spread function} $f=f_{S,a}\colon\field_2^n\rightarrow\{-1,1\}$ as 
    \begin{align*}
        f(x)&=(-1)^{q(x)}\\
        &=\begin{cases}
            1, & \text{if }x\oplus a = \zero,\\
            -1, & \text{if }x\oplus a\in D, x\oplus a\neq \zero\\
            1, & \text{otherwise}.
        \end{cases}
    \end{align*}
\end{definition}
 One can think of the intermediate function $q$ we use as a Boolean function indexed on the subspaces of the partial spread. Though not exactly the same, this definition is inspired by the construction due to Dillon \cite{dillon1974elementary}, who to the best of our knowledge was the first to observe the connection between partial spreads and bent Boolean functions (he referred to these as the $PS^+$ functions). Our first claim is that any $f$ obtained as above is bent.
\begin{lemma}\label{lemma:f_bent}
    Let $f$ be obtained as in~\cref{def:partial_spread_function}. Then $f$ is a bent Boolean function, with Fourier transform given by 
    \begin{equation*}
        \hat{f}(y)=(-1)^{\langle a,y\rangle}\begin{cases}
            2^{-m}, & \text{if }y =\zero,\\
            -2^{-m},& \text{if }y\in D^\perp, y\neq \zero,\\
            2^{-m},& \text{otherwise}.
        \end{cases}
    \end{equation*}
\end{lemma}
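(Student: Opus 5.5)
The plan is to reduce to the untranslated function, write it as a constant minus a sum of subspace indicators, and use the standard fact that the Fourier transform of a subspace indicator is supported on its orthogonal complement.

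\textbf{Step 1: removing the translation.} Write $f(x)=f_0(x\oplus a)$, where $f_0$ is the $a=\zero$ version of the function in \cref{def:partial_spread_function}. Substituting $z=x\oplus a$ in \cref{def:Fourier_transform} gives
\[
\hat f(y)=(-1)^{\langle a,y\rangle}\hat f_0(y),
\]
since $\langle x,y\rangle=\langle z,y\rangle+\langle a,y\rangle$. So it suffices to show that $\hat f_0(y)$ equals the case expression in the statement.

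\textbf{Step 2: decomposing $f_0$.} The subspaces in $D$ intersect pairwise only in $\zero$. Hence the sets $E_i\setminus\{\zero\}$ for $i\in D$ are disjoint, and
\[
f_0(x)=1-2\sum_{i\in D}\bigl(\mathbbm{1}[x\in E_i]-\mathbbm{1}[x=\zero]\bigr).
\]
For a subspace $E$ of dimension $m$,
\[
\sum_{x\in E}(-1)^{\langle x,y\rangle}=2^m\,\mathbbm{1}[y\in E^\perp],
\]
because the character $x\mapsto(-1)^{\langle x,y\rangle}$ is trivial on $E$ exactly when $y\in E^\perp$, and otherwise it is balanced on $E$. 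We also have $\sum_x(-1)^{\langle x,y\rangle}=2^n\mathbbm{1}[y=\zero]$. Using $|D|=2^{m-1}$ and linearity,
\[
2^n\hat f_0(y)=2^n\mathbbm{1}[y=\zero]-2^{m+1}\,\bigl|\{i\in D: y\in E_i^\perp\}\bigr|+2\cdot 2^{m-1}.
\]

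\textbf{Step 3: case analysis.} This is where the dual spread enters. Earlier in the section we showed that $S^\perp$ is itself a spread, so every nonzero $y$ lies in exactly one $E_i^\perp$.
\begin{itemize}
\item If $y=\zero$, then $y$ lies in all $2^{m-1}$ spaces $E_i^\perp$ with $i\in D$. The sum is $2^n-2^{m+1}2^{m-1}+2^m=2^m$, so $\hat f_0(\zero)=2^{-m}$.
\item If $y\neq\zero$ and $y\in D^\perp$, the count is $1$. The sum is $-2^{m+1}+2^m=-2^m$, so $\hat f_0(y)=-2^{-m}$.
\item If $y\neq\zero$ and $y\notin D^\perp$, the count is $0$. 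The sum is $2^m$, so $\hat f_0(y)=2^{-m}$.
\end{itemize}

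Combining with Step 1 gives the claimed formula. Every value has absolute value $2^{-m}=2^{-n/2}$, so $f$ is bent by \cref{def:bent_boolean_function}.

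The only step requiring care is the bookkeeping of the zero vector, which is excluded from each $E_i\setminus\{\zero\}$. Getting this right is exactly what makes the $y=\zero$ case come out to $+2^{-m}$ rather than something unbalanced, and it relies on the specific choice $|D|=2^{m-1}$.
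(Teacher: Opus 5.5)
Your proof is correct. It uses the same ingredients as the paper's proof: the translation factor $(-1)^{\langle a,y\rangle}$, the character sum $\sum_{x\in E}(-1)^{\langle x,y\rangle}=2^m\,\mathbbm{1}[y\in E^\perp]$ for an $m$-dimensional subspace $E$, and the dual spread to determine which $E_i^\perp$ contains a nonzero $y$. Where you differ is in how the computation is organized. The paper partitions $\field_2^n\setminus\{\zero\}$ into all $2^m+1$ blocks $E_i\setminus\{\zero\}$ with signs $z_i$, and it closes the nonzero case via $\sum_i z_i=1$. That step uses both $|D|=2^{m-1}$ and $|\overline{D}|=2^{m-1}+1$. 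You instead write $f_0$ as the constant $1$ minus indicators of the $D$-blocks only, so $\overline{D}$ never enters. This gains a little generality. Your argument only needs $|D|=2^{m-1}$ and that the members of $D$ pairwise meet in $\zero$; the latter already gives $E_i^\perp\cap E_j^\perp=(E_i+E_j)^\perp=\{\zero\}$ for distinct $i,j\in D$. So it applies to any such partial spread, whether or not it extends to a full spread. The paper's bookkeeping, by contrast, relies on the full spread structure on both the primal and dual sides.
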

Before establishing this, we first prove that bentness is preserved by translation. (One can show that it is preserved under all affine transformations, but translation suffices for our purposes.)
\begin{lemma}\label{lemma:translation_preserves_bentness}
    Let $f$ be a bent Boolean function, and $a\in\field_2^n$. Then the function $g\colon\field_2^n\to\{-1,1\}$ defined by $g(x)=f(x\oplus a)$ for $x\in\field_2^n$ is also bent.
\end{lemma}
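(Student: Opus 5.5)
The plan is to compute $\hat{g}$ directly from \cref{def:Fourier_transform} with the substitution $z = x\oplus a$, and show that $\hat{g}$ differs from $\hat{f}$ only by a sign. Fix $y\in\field_2^n$. As $x$ ranges over $\field_2^n$, $z$ ranges over $\field_2^n$ bijectively, since translation by $a$ is an involution on $\field_2^n$. Also, by bilinearity of the inner product over $\field_2$,
\begin{equation*}
\langle x,y\rangle = \langle z\oplus a, y\rangle = \langle z,y\rangle + \langle a,y\rangle \pmod 2.
\end{equation*}

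Substituting, I will rewrite
\begin{equation*}
\hat{g}(y)=\frac{1}{2^n}\sum_{x}f(x\oplus a)(-1)^{\langle x,y\rangle}=\frac{1}{2^n}\sum_{z}f(z)(-1)^{\langle z,y\rangle+\langle a,y\rangle}=(-1)^{\langle a,y\rangle}\hat{f}(y).
\end{equation*}
Taking absolute values gives $|\hat{g}(y)|=|\hat{f}(y)|=2^{-n/2}$ for every $y$, because $f$ is bent (\cref{def:bent_boolean_function}). Since $g$ is clearly $\{-1,1\}$-valued, $g$ is bent.

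There is no real obstacle here. The only point needing care is that the reindexing is a bijection and that the phase factor $(-1)^{\langle a,y\rangle}$ pulls out of the sum. I will also record the identity $\hat{g}(y)=(-1)^{\langle a,y\rangle}\hat{f}(y)$ explicitly, since it is exactly what produces the prefactor in \cref{lemma:f_bent}. That lemma then reduces to the untranslated case $a=\zero$, which is Dillon's partial spread computation.
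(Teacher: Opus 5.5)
Your proposal is correct and is essentially the paper's own proof. You make the change of variables $z=x\oplus a$, pull out the phase $(-1)^{\langle a,y\rangle}$ by bilinearity to get $\hat{g}(y)=(-1)^{\langle a,y\rangle}\hat{f}(y)$, and take absolute values, which is exactly how the paper argues and how it later uses this identity in \cref{lemma:f_bent}.
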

\begin{proof}
    We directly compute the Fourier transform of $g$. For any $y\in\field_2^n$,
    \begin{align*}
        \hat{g}(y)&=\frac{1}{2^n}\sum_x g(x)(-1)^{\langle x,y\rangle}\\
        &=\frac{1}{2^n}\sum_x f(x\oplus a)(-1)^{\langle x, y\rangle}\\
        &=\frac{1}{2^n}\sum_z f(z)(-1)^{\langle z\oplus a, y\rangle}\\
        &=\frac{(-1)^{\langle a,y\rangle}}{2^n}\sum_z f(z)(-1)^{\langle z, y\rangle}
        \\
        &=(-1)^{\langle a,y\rangle} \hat{f}(y)\,,
    \end{align*}
    where the third equality follows from the change of variables $z=x\oplus a$ and the fourth from bilinearity of the dot product. This implies that $|\hat{g}(y)|=|\hat{f}(y)| = 2^{-n/2}$ for every $y$, the latter equality since $f$ is bent by assumption, and so that $g$ is also bent.
\end{proof}
\noindent With this in hand, we can complete the proof of \cref{lemma:f_bent}.
\begin{proof}
    We first note that by \cref{lemma:translation_preserves_bentness}, this introduces the multiplicative factor of $(-1)^{\langle a,y\rangle}$ in front. We will now ignore the offset vector $a$ from henceforth, and just multiply by the factor at the end. We now consider the Fourier transform $\hat{f}(y)=\frac{1}{2^n}\sum_xf(x)(-1)^{\langle x,y\rangle}$. For each $E_i$, consider the mapping $\phi_{i,y}:E_i\rightarrow\field_{2}$ given by $x\mapsto\langle x,y\rangle$. First, if $\langle x,y\rangle=0$ for all $x\in E_i$, then we have $y\in E_i^\perp$. Then:
    \begin{equation*}
        \sum_{x\in E_i}(-1)^{\langle x,y\rangle}=2^m.
    \end{equation*}
    Else, by the rank nullity theorem, $\dim(E_i)=\dim(\field_{2})+\dim(\ker(\phi_{i,y}))$, which implies that $\dim(\ker(\phi_{i,y}))=m-1$. Hence, $|\ker(\phi_{i,y})|=2^{m-1}$, for any $y\not\in E_i^\perp$, so $2^{m-1}$ such values of $x$ will map $\langle x,y\rangle$ to $0$, and the rest to $1$. Hence, we have that
    \begin{equation*}\sum_{x\in E_i\setminus\{0\}}(-1)^{\langle x,y\rangle}=
        \begin{cases}
            2^{m}-1, & \text{if }y\in E_i^\perp,\\
            -1, & \text{otherwise}.
        \end{cases}
    \end{equation*}
    Now, if $y=\zero$ we have that
    \begin{align*}
        \widehat{f}(\zero)&=\frac{1}{2^n}\sum_{x\in\field_2^n} f(x)         
        = \frac{1}{2^n}\left(f(\zero)+\sum_{i\in D}\sum_{x\in E_i\setminus\{\zero\}} \underbrace{f(x)}_{=-1}
        + \sum_{i\notin D}\sum_{x\in E_i\setminus\{\zero\}} \underbrace{f(x)}_{=1}\right)
        \\        
        &= \frac{1}{2^n}\left( 1 - 2^{m-1}\cdot (2^m-1) + (2^{m-1}+1)\cdot (2^m-1)\right) \\
        &= \frac{1}{2^{m}}
    \end{align*}
    Else, we let $z_i$ denote the value of $f(x)$ for $x\in E_i\setminus\{\zero\}$. Then for each nonzero $y\in E_k$, we have
     \begin{align*}
        \hat{f}(y)&=\frac{1}{2^n}\left(1+\sum_i\sum_{x\in E_i\setminus\{\zero\}}z_i\cdot(-1)^{\langle x,y\rangle}\right)\\
        &=\frac{1}{2^n}\left(1+z_k(2^m-1)-\sum_{i\neq k}z_i\right).
    \end{align*}
    Notice that $\sum_i z_i=1$, and hence $\sum_{i\neq k}z_i=1-z_k$, and so the above simplifies to $\hat{f}(y) = z_k/2^m$. Furthermore, $z_k=1$ if $E_k\not\in D$ and $-1$ otherwise. putting it all together, this gives that:
    \begin{equation*}
        \hat{f}(y)=(-1)^{\langle a, y\rangle}\begin{cases}
            2^{-m}, & \text{if }y=\zero,\\
            -2^{-m}, & \text{if }y\neq \zero\in D^\perp,\\
            2^{-m}, & \text{otherwise}.
        \end{cases}
    \end{equation*}
\end{proof}
We define the hard instances as follows. In the \yes-instance (i.e., $\forr(f,g)=1$), we let $f_{yes}$ be the function above, and $g_{yes}=2^m \hat f$ to create the distribution $\mathcal{D}_{yes}$. In the \no-instance, we let $f_{no}$ be the negative of the function above $f_{no}=-f$, and $g_{no}=2^m\hat{f}$ to create the distribution $\mathcal{D}_{no}$. More formally, we define the \yes-instance, $\mathcal{D}_{yes}$ as follows: Take all possible spreads, along with each possible partition of each spread via the choices of indices in $D$ and $\overline{D}$. We allow for spreads that are the same up to permutation of the subspaces as well. Even though these will lead to the same function, allowing for permutations to constitute different spreads makes the probability analysis later easier. It also inflates the probability of each function by the same factor, so it does not distort drawing from the distribution of all uniform spreads. Then, define the pair of functions $(f,g)$ on the spread and the respective orthogonal complement. For the \no-instance, $\mathcal{D}_{no}$, we do the same, except we define the pair of functions $(-f,g)$ instead. Now, as a final check, observe the forrelation is as required for our claimed definition of $g$:
\begin{align*}
    \forr(\pm f,g)&=2^{-m}\sum_x\pm\hat{f}\cdot 2^m\cdot \hat{f}\\
    &=\pm\sum_x\hat{f}^2\\
    &=\pm1,
\end{align*}
where the sign depends on the \yes or \no-instance.
\subsection{The nonadaptive setting and proof of \cref{thm:2_collision_Forrelation}}
We model extremal Forrelation as a game between the assigner and an adversary. The assigner's role is to respond to the adversary's queries, and the adversary's role is to use the minimum amount of queries possible in order to distinguish the two instances. In the nonadaptive setting, the adversary presents a sequence of $\ell$ queries, $x_1,\dots,x_\ell$ to the assigner all at once. Then for every query, the adversary also notes which function they wish to query. The assigner will then respond with the value of the queries for the respective function all at once. Afterwards, the adversary must distinguish between whether the pair $(f,g)$ are a \yes-instance or a \no-instance of the Forrelation problem. This can be framed as the following game:
\paragraph{Game 1 - (Nonadaptive) Forrelation Game.} The assigner first uniformly at random picks whether they construct a \yes or \no-instance, then uniformly samples a partial spread function from the respective distributions defined previously. Then, the adversary presents a sequence of $\ell$ queries, $x_1,\dots,x_\ell$, to the assigner, as well as which the function they wish to query on each one. The assigner, after receiving all of them, will respond with the corresponding value for each query in each function. The adversary must then guess whether the assigner chose a \yes or \no-instance.

The extremal Forrelation problem then is the same as determining what values $\ell$ allow the adversary to win this game with high probability. We note the above can be made adaptive by allowing the queries to be adaptive. To build to our main result, we study an easier (for the adversary) version of this game, which we refer to as the \emph{collision game}.

\paragraph{Game 2 - Nonadaptive Collision Game.} The assigner samples a partial spread function with the same process as above, and the adversary queries nonadaptively with the same process above. This time, we bound the number of permitted queries by $\ell<\frac{1}{4}(2^m+1)$ (if we allow $\ell
\geq\frac{1}{4}(2^m+1)$, then we trivially have a bound of $\Omega(2^{n/2})$, so this assumption is without loss of generality). Instead of specifying which function to query however, the assigner will assume the adversary queries both $f$ and $g$.\footnote{As a technical note, our construction in Game 2 actually gives the adversary $2\ell$ queries, instead of $\ell$: this only affects the resulting lower bound by a constant factor, but makes the argument simpler.} Further, instead of responding with the values $f(x_i)$ and $g(x_i)$, they will instead tell the adversary which subspace the query falls in. For example on a query $x_i$ which happens to be in $E_{10}$ and $E_{22}^\perp$, the adversary will respond with the pair $(10, 22)$. The adversary wins if they either query the random offset vector $a$ (in which case they are told so), or they query two linearly independent non-offset vectors in the same subspace in $S$, or two linearly independent non-offset vectors in the same subspace in $S^\perp$. The latter two win conditions constitute a collision in the same subspace, hence the name ``collision game.''

\paragraph{Relation between the games.} Game 2 thus just requires to either guess the offset vector $a$, or to query two nonzero vectors from the same subspace. Before proceeding, we argue that if the adversary does not win Game 2, they cannot win Game 1, which is formally shown in our next lemma:
\begin{lemma}\label{lemma:game_2_easier_than_game_1}
    Assume the adversary does not succeed in winning Game 2, and makes $\ell\leq c\cdot2^{n/4}$ queries. Then, they cannot win Game 1.
\end{lemma}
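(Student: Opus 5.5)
The plan is a swapping (coupling) argument. If the adversary has not won Game~2, the transcript of answers in Game~1 has the same distribution under $\mathcal{D}_{yes}$ and $\mathcal{D}_{no}$. I would show this by conditioning on the event $\mathcal{E}$ that Game~2 is not won. On $\mathcal{E}$, no query equals $a$, and any two nonzero translated queries $x_i\oplus a$, $x_j\oplus a$ lying in a common subspace of $S$ are linearly dependent. Over $\field_2$ that means they are equal. The same holds for the $x_i$ in subspaces of $S^\perp$. So the queried points $x_i\oplus a$ hit pairwise distinct subspaces of $S$, and the points $x_i$ hit pairwise distinct subspaces of $S^\perp$, each subspace at most once. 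In the answers, $f(x_i)$ is determined only by whether the index of the subspace containing $x_i\oplus a$ lies in $D$. The value $g(x_i)=2^m\hat f(x_i)$ equals $(-1)^{\langle a,x_i\rangle}$ times a sign determined by whether the index of the subspace of $S^\perp$ containing $x_i$ lies in $D$ (with the convention at $\zero$).

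The key step is an explicit measure-preserving bijection on the sample space (ordered spread, choice of $D$, offset $a$). On $\mathcal{E}$, it should flip every $f$-answer while fixing every $g$-answer; this turns a \yes-transcript into an identical-looking \no-transcript, since the \no-instance uses $-f$. The natural map is to replace $D$ by a set $D'$ that complements membership exactly on the indices of subspaces of $S$ hit by $f$-queries, and preserves membership on the indices hit by $g$-queries in $S^\perp$. This requires the two index sets to be disjoint. The distinctness above does not give that directly: a query could have $x_i\oplus a\in E_j$ and $x_k\in E_j^\perp$ for the same $j$. So I would instead use the freedom that the sample space consists of ordered spreads. Permuting the labels of subspaces lets me reassign which labels are in $D$, subject only to $|D|=2^{m-1}$. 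Since $\ell<\frac14(2^m+1)$, at most $2\ell<\frac12(2^m+1)$ labels are constrained. Hence there is enough slack among the unconstrained labels to rebalance $|D'|=2^{m-1}$, and the map can be made a bijection by a fixed canonical rule, such as complementing the constrained $f$-labels and then swapping an equal number of unconstrained labels.

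Then I would check that the rule is an involution, or at least injective, and that it preserves $\mathcal{E}$ and the query responses' subspace structure. This gives equal probabilities for each transcript under the two distributions conditioned on $\mathcal{E}$, so the adversary's advantage is at most $\Pr[\overline{\mathcal{E}}]$. The main obstacle is the conflict case where the same label $j$ is hit by both an $f$-query (through $E_j$) and a $g$-query (through $E_j^\perp$). There the flip of $f$ forces a flip of $g$'s sign, and relabeling alone cannot fix it. The remedy I would try first is to also compensate through the offset: change $a$ to $a'$ so that $(-1)^{\langle a,x_i\rangle}$ flips on exactly those conflicting $g$-queries. That amounts to a linear system in $a'$ with at most $\ell$ constraints, which is solvable generically. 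If that fails, I would enlarge the Game~2 winning condition, or bound the probability of such conflicts by a birthday estimate of order $\ell^2/2^m$, absorbed using $\ell\le c\,2^{n/4}$.
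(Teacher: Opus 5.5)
\paragraph{The bijection step fails.} Your reduction to labels is right, and you have caught something the paper's own proof glosses over. Its swap asserts that the $g$-answers are unchanged, but moving a label $j$ between $D$ and $\overline{D}$ also moves $E_j^\perp$ between $D^\perp$ and $\overline{D}^\perp$. So a bucket hit by both an $f$-query and a $g$-query breaks that swap.

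The central step of your plan still cannot work. No measure-preserving bijection can turn \yes-transcripts into identical \no-transcripts on $\mathcal{E}$, because the two conditional transcript laws genuinely differ. Since $2^m+1$ is odd, $|D|=2^{m-1}<|\overline{D}|$. Already for one query $x$, $\Pr_{\mathrm{yes}}[f(x)=-1\mid x\neq a]=2^{m-1}/(2^m+1)$, whereas $\Pr_{\mathrm{no}}[f(x)=-1\mid x\neq a]=(2^{m-1}+1)/(2^m+1)$. Your injectivity check fails accordingly. With a single $f$-hit label $h$, your rule must map the $\binom{2^m}{2^{m-1}}$ index sets $D\not\ni h$ injectively into the $\binom{2^m}{2^{m-1}-1}$ sets $D'\ni h$, a strictly smaller family. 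No rebalancing or relabeling rule can fix this, since the laws themselves differ.

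What you can prove is closeness. Let $H_f,H_g$ be the labels hit by $f$- and $g$-queries, and $T=H_f\cup H_g$, so $|T|\le 2\ell$. The good event depends only on $(S,a)$, so it has the same law in both worlds and is independent of $D$. On it, the \yes-transcript is a fixed function $F_{S,a}(D\cap T)$. Provided $H_f\cap H_g=\emptyset$, the \no-transcript is $F_{S,a}((D\triangle H_f)\cap T)$. Write $\Pr[D\cap T=A]$ as a product of $|T|$ factors, each $\tfrac12(1\pm O(\ell/2^m))$. This shows the law of $D\cap T$ is within total variation $O(\ell^2/2^m)$ of the uniform law on subsets of $T$, and that uniform law is invariant under $A\mapsto A\triangle H_f$. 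Hence the extra advantage is $O(\ell^2/2^m)=O(c^2)$. This is the quantitative form of the paper's ``$1/2\pm O(2^{-m})$'' remark, and it makes the paper's spread-counting step unnecessary.

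\paragraph{The conflict case.} Your first remedy fails. The offset $a$ enters $f$ through $x_i\oplus a$, so replacing $a$ by $a'$ moves every $f$-query point into a (generally) different member of $S$. That destroys the configuration you want to preserve. Moreover, requiring $\langle a\oplus a',x_k\rangle=1$ exactly on the conflicting $g$-queries gives an inconsistent linear system once those queries are dependent. For example, take $x_3=x_1\oplus x_2$ with only $x_1$ in conflict; the adversary chooses the queries, so this can happen.

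Your fallback is the correct argument and should be the main one. By $\mathrm{GL}_n(\field_2)$-invariance of the uniform spread, the member $E$ containing a fixed nonzero $u$ is uniform among $m$-dimensional subspaces containing $u$. So for any nonzero $v$, $\Pr[v\in E^\perp]\le\binom{2m-2}{m-1}_2/\binom{2m-1}{m-1}_2=(2^m-1)/(2^{2m-1}-1)\le 2^{1-m}$. A union bound over the at most $\ell^2$ pairs $(i,k)$, including $i=k$, bounds the conflict probability by $2\ell^2/2^m\le 2c^2$. This event also depends only on $(S,a)$, so you can fold it into the bad event. With both repairs, the advantage in Game~1 is at most $\Pr[\text{Game 2 is won}]+O(c^2)$, which is the intended content of the lemma.
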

\begin{proof}
    Firstly, assume that the adversary guesses the offset vector $a$. Then, they could query $f(a)$, and completely figure out whether a \yes-instance or \no-instance was chosen, since in a \yes-instance, $f(a)=1$, whereas in a \no-instance, $f(a)=-1$. Notice that this one query strategy is not doable with any vector apart from the offset vector. For any non-offset vector $x$, with probability taken over all spreads, the vector $x+a$ has a $2^{m-1}/(2^m+1)$ probability to be in $D$, and hence $f(x)=1$, and the remaining probability to be in $\overline{D}$, and hence $f(x)=-1$. In other words, each non offset vector query in $f$ will have roughly equal probability to give $1$ or $-1$, namely $1/2\pm O(2^{-m})$, and the adversary cannot learn anything useful from these queries. Furthermore, since the function $g$ is identical in both the \yes and \no instances, they also cannot learn anything by querying $g$ alone. Thus, to gain any meaningful information, the adversary must specifically query the offset vector in $f$. 

    Consider any $\yes$-instance,\footnote{This is without loss of generality, as the argument is analogous for $\no$-instances.} and assume that the adversary does not win Game 2 by querying the offset vector (and thus cannot win Game 1 with the 1-query strategy discussed earlier). As long as no collision occurs in any subspace, the following allows us to generate a \no-instance consistent with the transcript of a \yes-instance. We refer to a \emph{subspace bucket} as both the subspaces $E_i$ and $E_i^\perp$. Since $\ell<\frac{1}{4}(2^m+1)$, the majority of the subspace buckets are empty (i.e., there is no vector apart from the zero vector assigned to either $E_i$ or $E_i^\perp$). Given an arbitrary \yes-instance, we can find a pairing of the \no-instance as follows (note this pairing is not unique): Take all $E_i$ that have been assigned nonzero vectors in $D$, and swap them with empty subspaces in $\overline{D}$. Next, do the same in reverse, by taking all $E_i$ that have been assigned nonzero vectors in $\overline{D}$ by the queries in Game 2, and swap them with empty subspaces in $D$. We can always do this, since the majority of the subspace buckets are empty, and since we are swapping with empty subspace buckets, there is no violation of orthogonality constraints. Span constraints are also preserved, since we do not change any of the subspaces themselves. Then, from this \yes-instance, we have created a \no-instance which gives the same transcript: the queries to $g$ remain unchanged, and the queries to $f$ occur in the opposite partial spread grouping (i.e., if they were in $D$ in the \yes-instance, they are now in $\overline{D}$ in the \no-instance), and thus still produce the same query result. Further, note that this is indeed possible as long as the offset vector is not part of the queries. Then, the adversary in Game 1 does not know which subspaces their queries fall in, only the values, and these two functions generated from the above process produce the same transcript.\medskip

    One last point we need to argue is that after $\ell$ queries, we still have a large number of possible spreads (up to permutations). To see why, if after $\ell$ queries, even though there are no collisions, the actual number of spreads which satisfy the given assignments is small, the adversary may still be able to use the information they have to guess which spread was used to construct the partial spread. To argue this, let without loss of generality $E_1$ be the subspace in $f$ with the maximum load (arbitrarily pick one if there are ties). We will count the number of possible completions of this subspace, which will give a very loose lower bound on the number of possible spreads remaining.\footnote{We remark that it is, in general, an open problem to give a tight bound on the number of bent Boolean functions, and hence the number of spreads.} For this case, we know that each subspace in $f$ has a maximum of one nonzero vector, so we need to pick $m-1$ additional vectors to complete this subspace. However, not all vectors are available to be used. Due to the subspace requirement, any vector that lies in the span of $\{E_1,E_i\}$ cannot be used, where $E_i$ is any other subspace with a nonzero vector. Each $E_i$ with a nonzero vector is of dimension one, so each pair of these spaces restrict a subspace of two vectors. Furthermore, we have $\ell-1$ possible $E_i$'s fulfilling this requirement. Hence, the number of vectors which cannot be used to complete this subspace $E_1$ is upper bounded by $2\cdot2\cdot\frac{\ell-1}{2}=2(\ell-1)$. Furthermore, $E_1^\perp$ may contain up to one vector, which also halves the space of eligible vectors from $2^n$ to $2^{n-1}$.

    We can now lower bound the number of ways to choose $m-1$ additional vectors. This is essentially picking a subspace of dimension $m-1$ from the eligible vectors, for which the count is given by
    \begin{align*}
        \frac{(2^{n-1}-2(\ell-1))(2^{n-1}-2^2(\ell-1))\cdots(2^{n-1}-2^{m-1}\ell)}{(2^{m-1}-1)(2^{m-1}-2)\cdots(2^{m-1}-2^{m-2})}
    \end{align*}
    For $\ell=c\cdot 2^{m/2}$, where $c$ is some sufficiently small constant, this expression is at least of magnitude $\Omega(2^m)$. Hence, there are at least exponentially many completions of the subspace $E_1$, which means there are at least exponentially many possible spreads remaining at this point, meaning the space of possibilities is still too large for the adversary to guess with any good probability. Hence, here the adversary still requires at least $\Omega(2^{m/2})$ queries, showing that, as claimed, winning in Game 1 is at least as hard as winning in Game 2. 
\end{proof}
\begin{remark}
    We briefly comment on why we define the \yes- and \no-instances by flipping the sign of $f$, rather than of $g$. Assume you try to do the opposite. Then, the adversary would have a 1-query winning strategy, simply by querying $g(\zero)$. Indeed, from \cref{lemma:f_bent}, this value would have an additional factor of $(-1)^{\langle a,0\rangle}=1$ in one instance, and a factor of $-(-1)^{\langle a,0\rangle}=-1$ in the other, thus immediately distinguishing the two instances. Notably, this can be done without knowledge of what the offset vector is. In contrast, in our construction where the sign of $f$ distinguishes the \yes and \no-instances, querying $g$ this way does not provide any information.
\end{remark}
\begin{remark}
    We note that for intuition, one can choose to ignore the ``offset vector'' $a$. The reason we apply this offset is because without it, one could (somewhat similarly as the $1$-query strategy above) query the zero vector in $f$, observe the signs, and trivially distinguish the two instances with two classical queries. Thus, the only role $a$ plays is to prevent the adversary from using this strategy (and variants of it). Otherwise, it plays no purpose, and our following arguments will hold without it: thus, for ease of reading one can think of the offset vector $a$ as being the zero vector throughout. %
\end{remark}
With this in hand, we can focus on establishing the following result, which implies \cref{thm:2_collision_Forrelation}.
\begin{prop}\label{thm:nonadaptive-2-collision}
    The adversary requires $\ell=\Omega(2^{n/4})$ queries to win Game 2.
\end{prop}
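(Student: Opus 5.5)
We will prove the statement with a union bound over the three winning events of Game 2: the adversary queries the offset vector $a$, two linearly independent non-offset queries land in the same subspace of $S$, or they land in the same subspace of $S^\perp$. Because the adversary is nonadaptive, the queries $x_1,\dots,x_\ell$ are fixed before the spread is drawn. So all randomness comes from the uniform choice of $a$ and of the (ordered) spread $S$. Our goal is to show that each event has probability $O(\ell^2 2^{-m})$; taking $\ell\le c\,2^{m/2}=c\,2^{n/4}$ for a small constant $c$ then makes the total winning probability small.

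The offset event is immediate. Since $a$ is uniform over the $2^n-1$ nonzero vectors, $\Pr[a\in\{x_1,\dots,x_\ell\}]\le \ell/(2^n-1)$, which is negligible. For collisions, we first fix a pair $x_i,x_j$ whose shifts $u=x_i\oplus a$ and $v=x_j\oplus a$ are nonzero and linearly independent. (Which subspace a query ``falls in'' is determined by $x\oplus a$; if $a=0$ is allowed, one simply reads $x$.) We will bound the probability that $u$ and $v$ lie in a common $E_k$. The plan is a symmetry argument. The group $GL_n(\field_2)$ acts on spreads, and conjugating a uniform spread by a uniformly random invertible matrix leaves its distribution uniform, so we may assume the distribution of spreads is $GL_n$-invariant. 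Then $u$ lies in exactly one $E_k$. Conditioned on that event, invariance under the stabilizer of $u$, which acts transitively on vectors $v\notin\{0,u\}$, makes $v$ uniform among the $2^n-2$ vectors outside $\{0,u\}$. Of these, exactly $2^m-2$ lie in the $E_k$ containing $u$. The collision probability is therefore $(2^m-2)/(2^n-2)\le 2^{-m}$, and it holds uniformly over the choice of $a$. The dual spread is handled by the same argument, since $S\mapsto S^\perp$ intertwines the $GL_n$-action through $M\mapsto (M^\top)^{-1}$. A union bound over $\binom{\ell}{2}$ pairs and both spreads gives probability at most $2\binom{\ell}{2}2^{-m}$. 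This is exactly \cref{lemma:nonuniform_balls_and_bins} with $k=2$ and a near-uniform bin distribution, and we could equivalently present it in that language.

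The main obstacle is justifying the invariance claim, namely that ``uniformly sample a spread'' yields a $GL_n(\field_2)$-invariant distribution. This holds because $GL_n$ maps spreads bijectively to spreads (it preserves dimension and trivial intersections), so the uniform measure on the finite set of ordered spreads is invariant. We then need to check carefully that the stabilizer of a nonzero $u$ is transitive on $\field_2^n\setminus\{0,u\}$, which follows by extending $\{u,v\}$ to a basis. One more subtlety is that the game reports pairs $(E_i,E_j^\perp)$ jointly. The union bound needs only the marginal collision probabilities, so this causes no difficulty. Pairs with $x_i\oplus a$ and $x_j\oplus a$ linearly dependent do not count as wins (over $\field_2$ that would mean $x_i=x_j$), so these are excluded. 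Summing the three events gives a winning probability of at most $\ell/2^n+\ell^2 2^{-m}$, which is below any fixed constant unless $\ell=\Omega(2^{m/2})=\Omega(2^{n/4})$.
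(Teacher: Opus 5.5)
Your proof is correct, and it takes a genuinely different route from the paper's.

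\textbf{The paper's route.} It processes the queries sequentially, first into $S$ and then into $S^\perp$, and builds conditional distributions $\mathcal{P}_i^f,\mathcal{P}_i^g$ given where the earlier queries were placed. On the $f$ side it argues, by ``replacing with uniform only makes collisions likelier'' reasoning, that things are no worse than uniform over $2^m+1-\ell$ buckets. On the $g$ side it uses the Gaussian-binomial ratio $R\ge\frac12$ to show that a bucket whose $E_i$ already holds an orthogonal query is at most twice as likely. The rearrangement inequality then reduces everything to identical distributions on $\{0,p_e,2p_e\}$ with $\|\mathcal{P}\|_2^2=O(2^{-m})$.

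\textbf{Your route and what it buys.} The first-moment (union) bound only needs the \emph{marginal} probability that a fixed pair collides. $GL_n(\field_2)$-invariance gives that marginal exactly, $(2^m-2)/(2^n-2)$. For $S^\perp$ you can say even more simply that it is itself a uniform spread, since $S\mapsto S^\perp$ is an involution on spreads. So you never track conditioning, the orthogonality coupling between $S$ and $S^\perp$, or independence of tosses. This makes your argument shorter and more rigorous. It also extends directly to $k$-collisions for nonadaptive queries: $GL_n$ is transitive on linearly independent $k$-tuples, giving an exact per-$k$-set probability $(2^m+1)\prod_{i<k}\frac{2^m-2^i}{2^n-2^i}=O(2^{-m(k-1)})$ without the $2^{k-1}$ bias analysis.

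\textbf{What the paper's framework is for.} Its sequential setup is aimed at the adaptive game (Game 3). There the queries depend on the transcript, so one must control the conditional probability of a new collision given the history. Your fixed-pair symmetry computation does not by itself provide that.

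\textbf{Small fixes.}
\begin{itemize}
\item Rephrase the stabilizer step. Write $E(u)$ for the element of $S$ containing $u$; then $\Pr[v\in E(u)]$ is the same for every $v\notin\{0,u\}$, and these probabilities sum to $|E(u)\setminus\{0,u\}|=2^m-2$. As written, it says $v$ becomes uniform, but $v$ is fixed, not random.
\item The action is $E\mapsto ME$, not conjugation.
\item On the $g$ side, the bucket is determined by $x$ itself, not $x\oplus a$; the offset only contributes the sign $(-1)^{\langle a,x\rangle}$. Your argument covers this case anyway.
\item Do not recast the bound as \cref{lemma:nonuniform_balls_and_bins}. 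That lemma assumes independent tosses, and the true assignments are not independent. Your direct union bound is better precisely because it avoids that assumption.
\end{itemize}
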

\begin{proof}
    We consider the sequence of queries $x_1,\dots,x_\ell$, and assign them to the spread $S$ (i.e., the function $f$), then the dual spread $S^\perp$ (i.e., the function $g$). We will process the sequence of queries in an arbitrary order, in two passes: completely answering the queries to $f$ first, then moving on to $g$. Before doing so, we must handle the case that the adversary queries the offset vector $a$. Since $a$ is chosen uniformly at random, this happens with probability $\ell/2^n=O(2^{-m})$, a vanishing probability. Thus, from now on we can assume the adversary never queries the offset vector $a$: for simplicity, we can then in the remainder of the proof ignore the offset vector, and just assume $a$ is the zero vector from here onwards.

    Given a uniformly chosen partial spread function, we walk through the probabilities each query has of being assigned to a given subspace in $S$. We permute uniformly at random the given sequence of queries $x_1,\dots,x_\ell$ (for ease of notation, we still refer to the sequence as $x_1,\dots,x_\ell$, even after permutation). The high level idea is as follows: we will create a sequence of probability distributions $\mathcal{P}_i^f$ first. Each $\mathcal{P}_i^f$ is conditional on the assigner's process of choosing the uniformly random partial spread function, and where the previous queries before it were assigned. Each represents the probability distribution that the given query $x_i$ will end up in any subspace in $S$. After determining these probability distributions, we will then argue that assigning the queries $x_i$ into the spread $S$ according to these probabilities will be very unlikely to cause a collision, unless $\ell=\Omega(2^{n/4})$. Assuming no such collision occurs, we then go through the same process for the function $g$, determining the probability distributions $\mathcal{P}_i^g$, and argue that this will also with high probability not cause a collision, unless $\ell=\Omega(2^{n/4})$. We then put these claims together to prove \cref{thm:nonadaptive-2-collision}. This latter analysis of $g$ will be the most involved part.

    We first consider $\mathcal{P}^{f}_1$, which is the probability distribution over all the $2^m+1$ subspaces for which $x_1$, the first query is assigned in $S$. Clearly, $\mathcal{P}^{f}_1$ is uniform over all the subspaces, since at this point all subspaces are empty.

    Next, we look at the probabilities that the second query, $x_2$ is assigned to any given subspace, conditional on where $x_1$ was assigned (note the adversary does not know where $x_1$ was assigned, but the assigner must place it in a subspace, and following queries must be consistent with this choice). There are no span requirements as of yet, so $x_2$ can go into any subspace. Looking at the probabilities, note that $x_2$ is slightly less likely to be assigned to the subspace that contains $x_1$, as compared to any empty subspace. This is because the specific subspace containing $x_1$ already has one linearly independent vector, and thus has only $m-1$ free spanning vectors, whereas every other subspace still has $m$ free spanning vectors. Hence, we can define the probability distribution $\mathcal{P}_2^f$, which is uniform over all the empty subspaces, and with a slightly reduced probability on the subspace which contains $x_1$. For the sake of bounding the probability of a collision, we can then for the sake of the analysis replace $\mathcal{P}_2^f$ by the uniform distribution over all the subspaces, since doing so only makes a collision more likely.

    Next, we consider the remaining queries, $x_3,\dots,x_\ell$. Here, there are a few cases to consider. We analyze the cases for the third query $x_3$, and the argument follows similarly for the remainder. 
    \paragraph{Case 1: $x_3$ is spanned by a subspace.} First, it is possible $x_3$ lies entirely in the space of a subspace $E_i$. In other words, conditioned on the already occurred queries, they have been assigned in a way such that $x_3$ lies in the span of two or more vectors assigned to the same subspace. In this case, $x_3$ is forced to be assigned into that subspace. However, no information is then gained by the adversary, since this is not a linearly independent vector. (We note that this case will never occur in Game 2, since the game ends on a collision. However, this case is simple enough to analyze, and we will need this when we generalize Game 2, so we analyze it here.)
    \paragraph{Case 2: $x_3$ is not spanned by previous queries.} The second case is that $x_3$ does not lie in the span of the previously assigned queries. Then, in this case $x_3$ may lie in any subspace. This case is similar to the case of $x_2$, where there is a slight bias against subspaces which already have one vector assigned, due to having less free spanning vectors. With the same argument, we can argue that it is enough to analyze $x_3$ being equally likely to belong in any subspace, since doing so only increases the probability of collision. 
    \paragraph{Case 3: $x_3$ is spanned by previous queries across different subspaces.} Finally, the last case is if $x_3$ lies in the span of some or all of the previously occurring queries, but they have been assigned to different buckets. This is where the assignment of $x_3$ will depend on the subspace closure properties: indeed, there may be some subspaces, where if $x_3$ were assigned, the subspace closure property would be violated, meaning $x_3$ cannot be assigned to those. (For example, if $x_3=x_1\oplus x_2$, and $x_1$ and $x_2$ are in different subspaces, then $x_3$ cannot be assigned to either of those two.) In this case, we can eliminate some subspaces as possible for $x_3$. We call the remaining subspaces \emph{eligible} if $x_3$ can be assigned to those without violating subspace closure properties. Then, we can proceed to analyze this case similar to the $x_2$ case, though only over eligible subspaces. That is, the probability $x_3$ causes a collision is no worse than the probability $x_3$ causes a collision as if it was assigned uniformly over the eligible subspaces. We can denote this probability distribution $\mathcal{P}_3^f$. Note that there are at least $2^m+1-\ell$ eligible subspaces, since there are at least $2^m+1-\ell$ empty subspaces. \bigskip

    This analysis can also be carried out for queries $x_4,\dots,x_\ell$, which will take a similar structure. That is, for the sake of the analysis it is enough to assume that each query is uniformly assigned over at least $2^m+1-\ell$ subspaces; however, which subspaces, and their number, may differ for each query. To simplify the resulting analysis further, we can relax the bound and assume that for each of these queries, the subspaces that have probability mass are identical to the ones with mass in $\mathcal{P}_3^f$. Indeed, different subspaces and numbers across queries only spreads probability mass around and reduces the probability of a collision. Using the same rationale, we can apply the same assumption to $\mathcal{P}_1^f$ and $\mathcal{P}_2^f$, further simplifying and letting us assume that all $\mathcal{P}_i^f$ are uniform over a set $U_i$ of $2^m+1-\ell$ buckets (where $U_1 \supset U_2 \supset \cdots \supset U_\ell$), the setting that maximizes the probability of a collision. Now, for the adversary to succeed after $\ell$ queries, $\ell$ needs to be such that the probability of seeing a collision in $S$ is $\Omega(1)$. From the above analysis, the number of queries needed in this regime is at least as many as if the $\ell$ queries were assigned uniformly and randomly among $2^m+1-\ell$ buckets. A birthday paradox argument tells us that we need at least $\Omega(2^{m/2})$ queries to see a collision in the $2^m+1-\ell$ subspaces if each query is assigned uniformly. Hence, we require that $\ell=\Omega(2^{m/2})$ for a high probability of collision.\bigskip

    Next, after processing the above queries in $f$ and conditioning on not seeing a collision, we also need to process the queries in $g$, requiring again that, with high probability, no collision occurs. This time, there is the added restriction imposed by the vectors already assigned in $S$, which make this case more complicated. Let us assume we process the queries in order $x_1,\dots,x_\ell$ (without loss of generality) again. We now have some cases for the vector $x_1$. This time, fixing any index $i$, we are after the probability $\Pr[x_1\in E_i^\perp]$, given some assumptions on $E_i$. Each of the subspaces will fulfill one of these requirements, and hence putting them all together, we can come up with the probability distribution $\mathcal{P}_1^g$, similar to above.
    
    \paragraph{Case 1: $E_i$ contains a query not orthogonal to $x_1$.} First, consider assigning $x_1$ to $E_i^\perp$ where the corresponding $E_i$ in $S$ contains some nonzero vector not orthogonal to $E_i$. Due to the orthogonality constraints, we know this cannot happen, and hence in this case, $\Pr[x_1\in E_i^\perp]=0$.

    \paragraph{Case 2: $E_i$ is empty.} The second case is that $x_1$ is assigned to $E_i^\perp$ when the corresponding $E_i$ in $S$ is also empty. Clearly, by symmetry all $E_i^\perp$ which satisfy this requirement have the same probability mass: let us denote this probability $p_1 \coloneqq \Pr[x_1\in E_1^\perp]$. We cannot give an explicit value for $p_1$ as of right now, since that value will depend on how many subspaces there are satisfying each of the three cases.

    \paragraph{Case 3: $E_i$ contains a query orthogonal to $x_1$.} The final, and most involved, case is that $x_1$ is assigned to $E_i^\perp$ where the corresponding $E_i$ in $S$ contains a query (which thus must be orthogonal to $x_1$). Again, all subspaces which satisfy this property have the same probability mass due to symmetry. While we cannot explicitly calculate this value, it is enough to analyze its relative size  compared to $p_1$.  Let us consider any two subspaces: $E_k^\perp$, for which the corresponding $E_k$ is empty, and $E_j^\perp$, for which the corresponding $E_j$ contains a query orthogonal to $x_1$. We are interested in the ratio between the  probability that $x_1$ is in $E_k^\perp$ and that of $x_1$ being in $E_j^\perp$.

    Intuitively, we can expect the probability in case 3 to be roughly twice that of in case 2 (i.e., around $2p_1$). To see this, if $x_1\in E_k^\perp$, where the corresponding $E_k$ is empty, then in completing the spread, $E_k$ needs to be assigned $m$ linearly independent vectors orthogonal to $x_1$, which occurs with probability around $1/2^m$. However, if $x_1\in E_i^\perp$, where the corresponding $E_i$ already contains one vector orthogonal to $x_1$, then in completing the spread, only an additional $m-1$ linearly independent orthogonal vectors need to be assigned, which occurs with probability around $1/2^{m-1}$. Hence, the probability that $x_1$ is assigned to the subspace satisfying the requirements of case 3 is roughly double that it is assigned to a subspace satisfying the requirements of case 2.

    We now make this intuition rigorous, by counting the number of (ordered) completions of possible spreads involving $E_k^\perp$ and $E_j^\perp$. To do this, we only need to count the number of choices for $E_k^\perp$ and $E_j^\perp$, since the completion of the rest of the spread outside of these two will be symmetric. The number of $m$-dimensional subspaces that also contain an arbitrary vector $x_1$ is given by
    \begin{equation}\label{m_dim_subspaces_with_xi}
        {2m-1\choose m-1}_2=\frac{(2^{2m-1}-1)(2^{2m-2}-1)\cdots(2^{m+1}-1)}{(2^{m-1}-1)(2^{m-2}-1)\cdots(2-1)}.
    \end{equation}
    This corresponds to the number of possible $E_k^\perp$ subspaces there are, since there is no restriction coming from $E_k$. We can do the same count for $m$-dimensional subspaces that are also orthogonal to some vector also orthogonal to $x_1$, which also contains $x_1$. This count is given by
    \begin{equation}\label{m_dim_orthogonal_x1_with_xi}
        {2m-2\choose m-1}_2=\frac{(2^{2m-2}-1)(2^{2m-3}-1)\cdots(2^{m}-1)}{(2^{m-1}-1)(2^{m-2}-1)\cdots(2-1)}.
    \end{equation}
    These two cases handle where $x_1$ is located. We next need to count the number of choices for the subspace where $x_1$ is not located. Assuming $x_1\in E_j^\perp$,  $E_k^\perp$ can be formed by choosing an $m$-dimensional subspace which does not include any queries from $E_j^\perp$ and also does not violate the subspace closure constraints. This count is given by
    \begin{equation}\label{x1_not_in_Ei}
        \frac{(2^{2m}-2^m)(2^{2m}-2^m\cdot2)(2^{2m}-2^m\cdot2^2)(2^{2m}-2^m\cdot2^3)\cdots(2^{2m}-2^{m}\cdot2^{m-1})}{(2^m-1)(2^m-2)(2^m-4)\cdots(2^m-2^{m-1})}.
    \end{equation}
    To explain this expression, the numerator counts the number of ordered subspaces there are, and the denominator eliminates the order. For the first choice, there are $2^{2m}$ total vectors, but we cannot use any of the $2^m$ in $E_j^\perp$. Then for the second, we cannot use any that are in $E_j^\perp$, or also any in the span of $E_j^\perp$ and the first vector we chose. This is a subspace of dimension $m+1$. We can continue this argument until all $m$ basis vectors are chosen, giving the above count.
    Now, assume $x_1\in E_k^\perp$, and we perform the same exercise to count the number of $E_j^\perp$ subspaces. Note that $E_j^\perp$ must be a subspace of the orthogonal complement of a given vector by assumption, so it is a subspace of a $(2m-1)$-dimensional space. Further, it cannot use any vectors in $E_k^\perp$, but the exact count will depend on exactly how many vectors in $E_k^\perp$ lie in this orthogonal complement. Let the number of vectors in $E_k^\perp$ lying in this orthogonal complement be $t$. We know that $t\geq 1$, since by assumption $x_1$ lies in it, and we know $t\leq m-1$. This is because if $t>m-1$, $E_k^\perp$ would contain at least an $m$-dimensional subspace of the orthogonal complement, and $E_j^\perp$ would be a disjoint $m$-dimensional subspace as well. However, the orthogonal complement only has dimension $2m-1$. Using the same counting argument as above, we can get the count as 
    \begin{equation}\label{x1_not_in_E_j}
        \frac{(2^{2m-1}-2^t)(2^{2m-1}-2^t\cdot2)(2^{2m-1}-2^t\cdot2^2)(2^{2m-1}-2^t\cdot2^3)\cdots(2^{2m-1}-2^{t}\cdot2^{m-1})}{(2^m-1)(2^m-2)(2^m-4)\cdots(2^m-2^{m-1})}.
    \end{equation}
    The ratio $R$ of \eqref{m_dim_subspaces_with_xi}$\cdot$\eqref{x1_not_in_E_j} and \eqref{m_dim_orthogonal_x1_with_xi}$\cdot$\eqref{x1_not_in_Ei} is given by:
    \begin{align*}
        R&=\frac{(2^{2m-1}-1)(2^{2m-2}-1)\cdots(2^{m+1}-1)}{(2^{2m-2}-1)(2^{2m-3}-1)\cdots(2^{m}-1)}\cdot\frac{(2^{2m-1}-2^t)(2^{2m-1}-2^t\cdot 2)\cdots(2^{2m-1}-2^{t}\cdot2^{m-1})}{(2^{2m}-2^m)(2^{2m}-2^m\cdot 2)\cdots(2^{2m}-2^{m}\cdot2^{m-1})}\\
        &=\frac{(2^{2m-1}-1)}{(2^{m}-1)}\cdot\frac{(2^{2m-1}-2^t)(2^{2m-1}-2^t\cdot2)\cdots(2^{2m-1}-2^{t}\cdot2^{m-1})}{(2^{2m}-2^m)(2^{2m}-2^m\cdot2)\cdots(2^{2m}-2^{m}\cdot2^{m-1})}\\
        &\geq\left(2^{m-1}+\frac{1}{2}\right)\cdot \frac{1}{2^m}\\
        &= \frac{1}{2}+\frac{1}{2^{m+1}}
    \end{align*}
    where the inequality is obtained by optimizing over $1\leq t\leq m-1$, specifically, by setting $t=m-1$.\footnote{The ratio is minimized when the numerator is minimized, and the numerator is a decreasing function of $t$. Here, $t=m-1$ causes the probability distribution $\mathcal{P}_1^g$ to be furthest from uniform.}
    Note that the inverse of this ratio gives how much this probability is more likely than $p_1$. This gives us that $\Pr[x_1\in E_j^\perp]\leq 2p_1$. Then, $\mathcal{P}_1^g$ is obtained putting each subspace into one of the three cases above, assigning the relative probabilities and then normalizing. 
    
    Having addressed the case of $x_1$, we can now proceed to handle the remaining queries $x_2,\dots,x_\ell$. First, with an argument analogous to the case in $f$, we can ignore all restrictions coming from other subspaces in $g$ for these queries, since we can use similar arguments to show that the restrictions imposed by these subspaces will result in $\mathcal{P}_i^g$ being uniform over a large set of the subspaces, on which we can define the relative probabilities without decreasing the probability of seeing a collision. Further, with the same argument as for $x_1$, we can show that each of these queries will result in a probability distribution where the subspaces fall into one of the three cases. We use this to create the respective probability distributions $\mathcal{P}_i^g$.
    We now analyze what the probability is of seeing a collision if each vector is assigned according to these probability distributions. We aim to reduce this to balls in bins with non-uniform probabilities. We know each query is assigned with probability distribution $\mathcal{P}_i^g$. We let $\mathcal{P}_{i,j}^g$ denote the probability query $i$ is assigned into the $j$-th bin. Letting $B_\ell$ be the maximum load after $\ell$ tosses, we are interested in $\Pr[B_\ell\geq k]$ for any given integer $k\geq 1$. Let $Z_k\subset2^{\{1,\dots,\ell\}}$ be the set of all subsets of $k$ distinct balls. Further, for $S\in Z_k$, let $\mathbbm{1}[S]$ be the indicator that all balls in $S$ fall into the same bin, and $A_k\coloneqq \sum_S\mathbbm{1}[S]$. Note that $B_\ell \geq k$ if, and only if, $A_k \geq 1$. (For Game~2, we will only need to consider $k=2$, but generalizing the argument later will rely on the case of arbitrary $k$.) Then, for each $S$, we have:
    \begin{equation}
        \label{eq:expectation:Pij}
        \mathbb{E}[\mathbbm{1}[S]]=\sum_{j=1}^{2^m+1}\prod_{i\in S}\mathcal{P}^g_{i,j}.
    \end{equation}
    By the rearrangement inequality (\cref{lemma:2_sequence_rearrangement}), this value is maximized when the largest individual probabilities in this distribution are concentrated on the same $j$ buckets. In other words, if we arrange each $\mathcal{P}_i^g$ in order of non-increasing probabilities, the orderings of the subspace buckets $j$ should be the same. Thus, in view of upper bounding~\eqref{eq:expectation:Pij}, we can assume each $\mathcal{P}_i^g$ has probabilities non-increasing from bucket $j=1$ to bucket $j=2^m+1$. 
    
    We now want to show that this quantity is further maximized when each $\mathcal{P}_i^g$ contains the same number of subspaces according to the corresponding cases. To see this, recall that within each $\mathcal{P}_i^g$, only three cases occur, which result in subspaces with probabilities of either $0$, $p_i$ exactly, or somewhere between $p_i$ and $2p_i$. Clearly, the number of subspaces with probability $0$ should be the same for each probability distribution, else the above sum only reduces. In fact, we should not only have the number be the same, but they should also appear in the same indices for each $\mathcal{P}_i^g$. Further, by convexity, one should also pair all subspaces that are Case~3 together, in order to maximize the sum. The count of the number of case 3 instances should also be equal, since if we ever have a subspace of case 3 in one distribution, multiplying a subspace of case 2 in another distribution, making the latter case 2 into case 3 can only increase the sum. This leaves all the subspaces with probability $p_i$ being in the same places for all the distributions. Further notice that for the sake of maximizing the above sum, we should assume that the probabilities are exactly one of three values in the set $\{0,p_i,2p_i\}$ (instead of $\{0\}\cup[p_i,2p_i]$). Overall, by the above discussion we get that~\eqref{eq:expectation:Pij} is maximized when all the $\mathcal{P}_i^g$ are equal, with probabilities in a set $\{0,p_e,2p_e\}$. (Since the value $p_i$ can now be assumed to be the same for all $i$, we make this clear by denote it by $p_2$.) This leads to the following upper bound on~\eqref{eq:expectation:Pij}: for any $S\in Z_2$,
    \begin{equation*}
        \mathbb{E}[\mathbbm{1}[S]]\leq \max_i\|\mathcal{P}_i^g\|_2^2.
    \end{equation*}
    We can now take this over all possible choices ${\ell \choose 2}$ to obtain
    \begin{align*}
        \mathbb{E}[A_2]&\leq |Z_2|\max_i\|\mathcal{P}_i^g\|_2^2
        ={\ell \choose 2}\max_i\|\mathcal{P}_i^g\|_2^2.
    \end{align*}
    This then allows us to bound $\Pr[B_\ell\geq 2]$ as follows:
    \begin{align*}
        \Pr[B_\ell\geq 2]
        &=\Pr[A_2\geq 1]\\
        &\leq \mathbb{E}[A_2]\\
        &\leq {\ell \choose 2}\max_i\|\mathcal{P}_i^g\|_2^2.
    \end{align*}
    Having established that the highest probability occurs when all distributions are the same, it remains to bound $\|\mathcal{P}_i^g\|_2^2$. Suppose we have $z_0$ subspaces with probability $0$, and $z_1$ with probability $2p_e$ (in particular, $z_0+z_1\leq \ell$, and we have $2^m+1-\ell-z_0-z_1$ subspaces with probability $p_e$).\footnote{This is because we have at most $\ell$ queries, and with the above reduction to ignore constraints from subspaces in $g$, we have assumed that $\mathcal{P}_i^g$ is only defined over some set of $2^m+1-\ell$ subspaces.} We can then calculate the probability $p_e$: as 
    \[
    2z_1p_e+(2^m+1-\ell-z_0-z_1)p_e =1
    \]
    we get 
    \begin{align*}
        p_e&=\frac{1}{2^m+1-\ell-z_0+z_1}.    
    \end{align*}
    This allows us to compute the norm as a function of $z_0,z_1$:
    \begin{align*}
        \|\mathcal{P}_i^g\|_2^2&=\left(\frac{2}{2^m+1-\ell-z_0+z_1}\right)^2z_1+\left(\frac{1}{2^m+1-\ell-z_0+z_1}\right)^2(2^m+1-\ell-z_0-z_1)\\
        &=\frac{2^m+1-\ell-z_0+3z_1}{(2^m+1-\ell-z_0+z_1)^2}\,,
    \end{align*}
    which is $O(1/2^m)$, since $\ell<\frac{1}{4}(2^m+1)$. Then,
    \begin{align*}
        \Pr[B_\ell\geq 2]
        &\leq {\ell \choose 2}\cdot O\left(\frac{1}{2^m}\right) = O\left(\frac{\ell^2}{2^m}\right)\,,
    \end{align*}
    so that, to obtain a constant probability of collision in $g$, one needs $\ell=\Omega(2^{m/2})$. 
    
    Hence, overall, for there to either be a collision in $f$ or a collision in $g$, we need that $\ell=\Omega(2^{m/2})$, as claimed. This completes the proof. 
\end{proof}

\section{An adaptive \texorpdfstring{$\Omega(2^{\frac{n}{2}(1-o(1))})$}{near-optimal lower} bound for Extremal Forrelation}\label{sec:adaptive_k_collision_extremal_Forrelation}
Having established the lower bound of~\cref{thm:2_collision_Forrelation}, we focus in this section on improving it in two ways: first, to apply to adaptive query algorithms, and second, to extend the argument from 2-collisions to $k$-collisions, for large enough $k$. Together, the two will yield our main result,~\cref{thm:main:forrelation}.

\subsection{Adaptive query algorithms}
At a high level, the main difficulty for going from the nonadaptive case to the adaptive case is that, in the nonadaptive case, we could reduce the set of queries to $f$ to uniform balls and bins, and only had to deal with orthogonality constraints when handling the queries to $g$. In contrast, for the adaptive case, we need to start considering the orthogonality constraints from the first query onwards, making the probability distributions for assignment and corresponding analysis significantly more technical. Fortunately, we will still be able to draw on the argument laid out in the nonadaptive case. We start by formulating an adaptive version of Game 2:
\paragraph{Game 3 - Adaptive Collision Game.} The assigner uses the same process to pick a partial spread function as in Game 1. Then the adversary, who is permitted a total of $\ell<\frac{1}{4}(2^m+1)$ queries,  gives a query $x_1$. The assigner will first respond with the value of $f(x_1)$, then the value of $g(x_1)$. The adversary will then query $x_2$, and the assigner will respond. The way the assigner responds is the same as Game 2. This process continues until either (1) the adversary queries the random offset $a$, queries two linearly independent non offset vectors in $S$ or in $S^\perp$, in which case they win, or (2) runs out of queries, in which case they lose.

\paragraph{Relation between Game 3 and Game 1.} By the same argument as Game 2, winning this adaptive collision is no easier than winning Game 1, the Forrelation game. 
\begin{lemma}\label{lemma:game_3_easier_than_game_1}
    Assume an adversary does not succeed in winning Game 3, and makes $\ell\leq c\cdot 2^{n/4}$ queries. Then, they cannot win Game 1.
\end{lemma}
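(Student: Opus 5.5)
The plan is to mirror the proof of \cref{lemma:game_2_easier_than_game_1}, this time for adaptive transcripts. I will construct, step by step, a measure-preserving bijection between \yes- and \no-instances that fixes every Game~1 transcript. The adversary may be modelled as a deterministic decision tree: randomized strategies are mixtures of deterministic ones, and a bound for each deterministic tree transfers by averaging. It therefore suffices to show the following. Conditioned on the adversary not winning Game~3 along a root-to-leaf path, the probability that the path is followed is the same under $\mathcal{D}_{yes}$ and under $\mathcal{D}_{no}$, up to an additive $o(1)$ coming from the event of hitting the offset vector.

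First, the offset vector. Until $a$ is queried, every answer is a function of the hidden instance that does not reveal $a$ beyond excluding the points already queried. Hence each new query equals $a$ with probability at most $1/(2^n-\ell)$, and a union bound over $\ell\le c2^{n/4}$ adaptive queries gives total probability $O(2^{-3m/2})$. From then on I take $a=\zero$, as in the nonadaptive case.

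Next, the swapping argument, now done online. Fix a \yes-instance together with a path of the decision tree along which no two linearly independent queries land in the same $E_i$ or in the same $E_i^\perp$. Call the subspace buckets touched along the path occupied; there are at most $2\ell<\frac12(2^m+1)$ of them. Choose a bijection $\pi$ that exchanges the occupied buckets lying in $D$ with unoccupied buckets in $\overline{D}$, and vice versa, leaving the subspaces themselves untouched. The spread and dual spread do not change, so every span and orthogonality constraint still holds. Since $g=2^m\hat f$ depends only on $D^\perp$ through signs that are flipped in the same way, each answer $g(x_t)$ is unchanged, and each $f(x_t)$ flips. The resulting $-f$ therefore reproduces every answer $f(x_t)$ on the path. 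Because the adversary's next query depends only on previous answers, the same path is followed. The important point for adaptivity is that $\pi$ can be built consistently by processing the queries in order, which gives a bijection between the sets of instances consistent with each path. The unoccupied buckets are interchangeable by the symmetry of the uniform distribution over ordered spreads and index sets of size $2^{m-1}$, so this bijection is measure preserving and the two instance distributions induce identical distributions on non-winning transcripts.

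Finally, as in \cref{lemma:game_2_easier_than_game_1}, I will check that many instances remain consistent with each transcript, so that the adversary cannot shortcut by brute force. I will take the most loaded $E_1$, which contains at most one vector, and count its completions avoiding the at most $2(\ell-1)$ forbidden vectors, restricted to the half-space determined by $E_1^\perp$. For $\ell\le c2^{m/2}$ this still gives $\Omega(2^m)$ completions. I expect the main obstacle to be the measure-preservation claim in the adaptive setting: one has to argue that the bucket swap commutes with the adaptive choice of later queries. My approach is to define the swap on the whole path at once; this is legitimate because the path is determined by the answers, and the answers are invariant under the swap.
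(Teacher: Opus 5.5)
Your outline (offset, path-wise swap, completion count) is the paper's. However, the swap step is wrong as stated, and repairing it exposes a case that ``not winning Game~3'' does not exclude.

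Moving an index $i$ between $D$ and $\overline{D}$ also moves $E_i^\perp$ between $D^\perp$ and $\overline{D}^\perp$. Write $z_i=-1$ for $i\in D$ and $z_i=1$ otherwise. A \yes-instance has $f(x)=z_i$ for $x\oplus a\in E_i\setminus\{\zero\}$ and $g(y)=2^m\hat f(y)=(-1)^{\langle a,y\rangle}z_i$ for $y\in E_i^\perp\setminus\{\zero\}$, so flipping $i$ flips \emph{both}. The \no-instance negates $f$ but not $g$. Negation therefore repairs the $f$-answers, while every $g$-answer in a flipped bucket comes out with the wrong sign. Your occupied set includes buckets touched only by $g$-queries (hence your bound $2\ell$), so your $\pi$ already fails on a single $g$-query.

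The correct swap flips exactly the indices $i$ having an $f$-query in $a\oplus E_i$, leaves buckets touched only by $g$ alone, and takes partners among buckets empty on both sides. This requires that no index be hit on both sides, and that is a real obstruction, not bookkeeping. If $x\oplus a\in E_i\setminus\{\zero\}$ and $y\in E_i^\perp\setminus\{\zero\}$, then $\langle a,y\rangle=\langle x,y\rangle$. Hence $f(x)g(y)(-1)^{\langle x,y\rangle}$ equals $+1$ on every \yes-instance with that spread and offset, and $-1$ on every such \no-instance. No spread-preserving swap can then produce a consistent \no-instance.

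Such a cross-hit is not a winning event of Game~3. The Case~3 analysis in the proof of \cref{thm:nonadaptive-2-collision} explicitly lets a $g$-query fall into $E_i^\perp$ with $E_i$ occupied and continues. The paper's sentence that ``the queries to $g$ remain unchanged'' silently assumes this case away too. You must treat it as a separate bad event. By the Case~3 ratio, a $g$-query lands in a given dual bucket over an occupied $E_i$ with probability at most about $2p_e=O(2^{-m})$; the same holds for $f$-queries by exchanging $S$ and $S^\perp$. A union bound over $\ell$ queries and at most $\ell$ occupied buckets then gives $O(\ell^2/2^m)=O(c^2)$. This is the event the swap actually needs to exclude at $\ell\le c2^{m/2}$; same-side collisions are harmless to it.

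There are also two smaller problems.
\begin{enumerate}
\item The swap is not an exact measure-preserving bijection, because $|D|\neq|\overline{D}|$. With $a$ fixed, a single $f$-query at $x\neq a$ returns $-1$ with probability $2^{m-1}/(2^m+1)$ under \yes but $(2^{m-1}+1)/(2^m+1)$ under \no. What is true is that, for each fixed spread and offset, the numbers of consistent \yes- and \no-choices of $D$ are binomial coefficients whose ratio is $1\pm O(\ell^2/2^m)$. That suffices, but it is what you should claim instead of identical transcript distributions.
\item The answers do depend on $a$, e.g.\ through the factor $(-1)^{\langle a,y\rangle}$ in $g$. Your claim that the posterior on $a$ is uniform off the queried points is therefore false as stated, and the $O(\ell/2^n)$ bound on hitting $a$ needs an actual argument.
\end{enumerate}
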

We want to analyze this game to show the following result.
\begin{theorem}\label{thm:adaptive_two_collision}
    The adversary requires $\ell=\Omega(2^{n/4})$ queries to win Game 3. Hence, extremal Forrelation requires $\Omega(2^{n/4}$) queries (adaptively).
\end{theorem}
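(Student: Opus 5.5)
The proof follows the nonadaptive argument of \cref{thm:nonadaptive-2-collision}, run as a single pass over the adaptive transcript. The key observation is that in Game 3, as long as the adversary has not yet won, every answer it receives is a pair of bucket indices $(i,j)$ with $x_t\in E_i$ and $x_t\in E_j^\perp$. The adversary's next query $x_{t+1}$ is therefore a (possibly randomized) function of this history only. We can then let the assigner sample the partial spread function lazily. Conditioned on the history, it assigns $x_{t+1}$ first to a subspace of $S$, then to a subspace of $S^\perp$, according to the true conditional distributions $\mathcal{P}^f_{t+1}$ and $\mathcal{P}^g_{t+1}$ induced by a uniformly random spread consistent with everything assigned so far. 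Querying the offset $a$ is handled as before: a union bound over the adaptively chosen queries gives probability at most $\ell/(2^n-1)=o(1)$, so we may take $a=\zero$. The reduction from Game 1 is \cref{lemma:game_3_easier_than_game_1}, so it suffices to show that the probability of a collision in Game 3 is small for $\ell\le c\,2^{m/2}$.

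The main step is a per-step bound: for every history with no collision so far and every next query $x_{t+1}$, each of $\mathcal{P}^f_{t+1}$ and $\mathcal{P}^g_{t+1}$ puts mass at most $O(1/2^m)$ on every bucket already containing a nonzero vector. To prove it, we redo the counting of \eqref{m_dim_subspaces_with_xi}--\eqref{x1_not_in_E_j}, now with the constraints coming from both spreads at once. Take a bucket $E_j$ (respectively $E_j^\perp$) that holds one earlier query, and compare it with an empty bucket $E_k$. The ratio of the number of consistent completions with $x_{t+1}$ placed in $E_j$ to the number with it placed in $E_k$ should be bounded by a constant, at most $2+o(1)$ as in the computation of $R$. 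The point is that each bucket currently contains at most one vector in each of $S$ and $S^\perp$, so the relevant completions differ by only one free basis vector. The constraints imposed by the other $\le 2\ell$ occupied buckets remove at most $O(\ell\cdot 2)$ forbidden vectors from a pool of size $2^{n-1}$. This is how the count in \cref{lemma:game_2_easier_than_game_1} went, and it perturbs the ratio by only a factor $1\pm O(\ell/2^m)$.

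We also need the number of eligible empty buckets to stay at least $2^m+1-2\ell\ge 2^{m-1}$, which holds since $\ell<\frac14(2^m+1)$. Combined with the constant ratio, this shows that every bucket receives probability $O(2^{-m})$. Queries lying in the span of earlier queries in distinct buckets only shrink the set of eligible buckets (Case~3 of the nonadaptive analysis), and queries in the span of a single bucket give no new information. In both cases the bound survives.

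With the per-step bound in hand, adaptivity is handled by a union bound over time rather than through \cref{lemma:nonuniform_balls_and_bins}. At step $t+1$ there are at most $2t$ occupied buckets, so the probability that $x_{t+1}$ creates a collision in $S$ or $S^\perp$ is at most $2t\cdot O(2^{-m})$, whatever the adversary's adaptive choice. Summing over $t<\ell$ gives a total collision probability of $O(\ell^2/2^m)$, which is $o(1)$ unless $\ell=\Omega(2^{m/2})=\Omega(2^{n/4})$. The main obstacle is the uniform ratio bound when $S$ and $S^\perp$ constraints interact, since an $E_j$ containing a vector orthogonal to $x_{t+1}$ boosts $E_j^\perp$. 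I would settle this first by bounding the number of completions above and below via Gaussian binomial products restricted to a hyperplane, as in the choice $t=m-1$ in the computation of $R$.
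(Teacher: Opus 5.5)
Your proof is correct at the paper's level of rigor, but it handles adaptivity by a different mechanism than the paper does.

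\textbf{The paper's route.} The paper's proof asserts that, ``with the same arguments'', all $\mathcal{P}_i^f$ (and all $\mathcal{P}_i^g$) may be taken identical, with values in $\{0,p,2p\}$. It then reuses the nonadaptive machinery: the rearrangement inequality, followed by $\Pr[B_\ell\ge 2]\le\binom{\ell}{2}\max_i\|\mathcal{P}_i\|_2^2$. That bound is derived by treating the $\ell$ tosses as independent draws from fixed distributions.

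\textbf{Your route.} You sample lazily and prove a per-step $\ell_\infty$ bound. Every already-occupied bucket receives conditional mass at most a constant times $p_e$. Here $p_e$ is the common mass of the fully empty buckets; these are equiprobable by label symmetry, and there are at least $2^m+1-2\ell\ge 2^{m-1}$ of them, so $p_e=O(2^{-m})$. The constant comes from the $R$-type ratio computation. You then sum the conditional collision probabilities $2t\cdot O(2^{-m})$ over time.

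\textbf{Comparison.} Your route gives a genuinely adaptive argument: it never needs the tosses to be independent or the distributions to be fixed in advance. This is precisely the point the paper's short proof glosses over, since $\mathcal{P}_{t+1}$ depends on the realized assignments and on the adversary's choice of $x_{t+1}$. It also makes the rearrangement and ``all distributions equal'' reductions unnecessary, because only a max-mass bound on occupied buckets is used. The paper's route keeps the adaptive case formally parallel to the $\binom{\ell}{k}\|\mathcal{P}\|_k^k$ analysis used for $k$-collisions. Your approach extends there as well, by chaining the conditional per-step bounds over the later members of each $k$-subset of time steps.

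\textbf{Shared caveats.} Both proofs rest on the same unverified ingredient: that the number of completions of the rest of the spread can be treated as independent of where the current query is placed. Your one-line treatment of the offset is as loose as the paper's. In the adaptive game, the returned labels of the buckets containing $x_t\oplus a$ leak a little information about $a$ (for instance, through cross-coincidences between $S$ and $S^\perp$). So the bound $\ell/(2^n-1)$ requires an argument that the posterior on $a$ stays close to uniform, not just a union bound. A minor imprecision: when completing a subspace, the number of forbidden vectors at the $s$-th basis step is $O(\ell 2^s)$, not $O(\ell)$. The relative loss is still $O(\ell/2^m)$, so your conclusion is unaffected.
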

\begin{proof}
    To analyze this game, we can come up with the same sequence of probability distributions, except now they are ordered $\mathcal{P}_1^f$, $\mathcal{P}_1^g,\dots,\mathcal{P}_\ell^f,\mathcal{P}_\ell^g$. Again, one can argue that the tosses according to $\mathcal{P}_i^f$ do not cause a collision in $S$, and similarly for $\mathcal{P}_i^g$ in $S^\perp$. However, in analyzing $\mathcal{P}_i^g$ in the nonadaptive case, we have already shown how to analyze orthogonality constraints between the two spreads. That analysis just needs to start as early as the first query to $g$ and the second query to $f$, instead of in the nonadaptive case where we ignore all of these when handling $f$. With the same arguments, we can assume that all $\mathcal{P}_i^f$ are equal, and only take some values $\{0,p_f,2p_f\}$, and similarly, all $\mathcal{P}_i^g$ are equal, and only take values $\{0,p_g,2p_g\}$. We already showed that in this case, these differences in probabilities still require $=\Omega(2^{m/2})$ if we want to see a collision with constant probability. Hence, even adaptive adversaries require this many queries.
\end{proof}
\subsection{Extending to \texorpdfstring{$k$}{k}-collisions}
Having finally proved a lower bound matching the one due to Girish and Servedio, we will now show how to improve on this bound. The starting point in doing so is the observation that, to establish~\cref{thm:adaptive_two_collision} through Games 2 and 3, we stopped at \emph{one} collision in a bucket: however, there is no reason to do so, and one could continue the game until a $k$-collision occurs, where a $k$-collision refers to $k$ linearly independent vectors landing in the same subspace in either $S$ or $S^\perp$. We refer to this as the \emph{$k$-collision game}, for a given integer parameter $k\geq 2$. We further assume that $k<c\cdot \sqrt{m}$, for some appropriately small constant $c>0$, and explain the necessity of this later on.
\paragraph{Game 4 -- $k$-Collision Game.} This game functions the same as Game 2, except instead of the adversary winning when there is either a collision in $S$ or $S^\perp$, the adversary now only wins if there is a $k$-collision in $S$ or $S^\perp$.

\paragraph{Relation between Game 4 and Game 1.} First, for the same reason as in Game 3, we can analyze the nonadaptive version of this game without loss of generality. Assuming this, we now show that this is easier than Game 1, with the similar swapping and subspace counting argument as used previously.
\begin{lemma}
    Assume an adversary does not succeed in winning Game 4, and makes $\ell\leq c_0\cdot 2^{m(k-1)/k}$ queries, where $k\leq c_1\cdot \sqrt{m}$ (where $c_0,c_1>0$ are two sufficiently small absolute constants). Then, they cannot win Game 1.
\end{lemma}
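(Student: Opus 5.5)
The plan follows the structure of \cref{lemma:game_2_easier_than_game_1}, in three steps: handle the offset vector, build a swapping bijection, and lower bound the remaining completions by subspace counting.

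\emph{Offset vector.} First I dispose of the event that some query equals $a$. Since $a$ is uniform over nonzero vectors, this happens with probability at most $\ell/(2^n-1)=o(1)$ for $\ell\le c_0 2^{m(k-1)/k}\le 2^m$. Once this event is excluded, the queries to $f$ reveal only whether $x\oplus a$ lies in a subspace indexed by $D$ or by $\overline{D}$. The queries to $g$ have the same distribution in both instances.

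\emph{Swapping argument.} Condition on a \yes-transcript in which no bucket receives $k$ linearly independent queries. Every bucket $E_i$ (together with $E_i^\perp$) then holds queries spanning a subspace of dimension at most $k-1$. Since $\ell<\frac14(2^m+1)$, at most $2\ell<\frac12(2^m+1)$ buckets are nonempty, so more than half are empty on both sides. I pair every nonempty bucket indexed by $D$ with an empty bucket indexed by $\overline{D}$, and vice versa, then exchange their $D/\overline{D}$ labels. The subspaces themselves are unchanged, so span and orthogonality constraints remain valid. Every $f$-answer flips sign, which exactly reproduces the transcript of $-f$, while $g$ stays fixed. This gives a measure-preserving bijection between \yes- and \no-completions consistent with the transcript. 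As in the earlier lemma, permutations of the spread are counted separately, so the bijection preserves the uniform measure.

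\emph{Counting completions (the main obstacle).} I need to show that the number of spreads consistent with the transcript is still huge, so that the adversary cannot hope to pin down the instance. I will fix the most loaded bucket $E_1$, which already contains a known subspace $V$ of dimension $t\le k-1$, while $E_1^\perp$ contains a known subspace $W$ of dimension at most $k-1$. Any completion of $E_1$ must satisfy three conditions. It must contain $V$. It must lie inside $W^\perp$, a space of dimension at least $n-k+1$. It must avoid every vector in the span of $E_1$ and the already-populated parts of the other buckets; each such constraint excludes at most $2^{2(k-1)}$ vectors, so there are at most $\ell\cdot 2^{2k}$ forbidden vectors per step.

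The count of completions is then a product of $m-t$ factors of the form
\[
\frac{2^{\,n-k+1}-2^{t+j}\ell\,2^{2k}}{2^m-2^{j}},\qquad j=0,\dots,m-t-1.
\]
I will check that each numerator is at least half of $2^{n-k+1}$ whenever $\ell\le c_0 2^{m(k-1)/k}$ and $k\le c_1\sqrt m$. The $2^{O(k^2)}$ losses arising from the $k$ nested constraints must be absorbed by $2^{m/k}$-type slack, and this is exactly where $k=O(\sqrt m)$ is needed. Under these conditions the product is $2^{\Omega(m(m-k))}$, which leaves exponentially many consistent spreads. The delicate part is bounding the forbidden set when several populated buckets interact. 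I would first try a union bound over pairs of buckets, using that each pair spans dimension at most $2(k-1)+t$. If that turns out to be too lossy, I would fall back to simply discarding vectors in $W^\perp\cap(\text{populated span})$, whose total size is at most $2^{k}\ell$ times the dimension factor.

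Combining the three steps, the transcript distributions under \yes and \no agree up to $o(1)$ in total variation. Hence the adversary cannot win Game 1.
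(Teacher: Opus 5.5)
Your outline (discard the offset, relabel buckets between $D$ and $\overline{D}$, count completions of the most loaded $E_1$) is exactly the paper's proof. Your counting arithmetic is also fine: the largest subtracted term $2^{m-1+2k}\ell$ stays below $2^{2m-k}$ once $m/k\gtrsim 3k$, which $k\le c_1\sqrt m$ guarantees.

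The gap is in the swapping step, at ``every $f$-answer flips sign \dots\ while $g$ stays fixed.'' The second function is $g=2^m\hat f$, i.e.\ $g(y)=(-1)^{\langle a,y\rangle}(-1)^{[y\in D^\perp\setminus\{\zero\}]}$, so it depends on the same labels. Moving index $i$ across $D/\overline{D}$ flips $g$ on every queried $y\in E_i^\perp$, just as it flips $f$ on every queried $x$ with $x\oplus a\in E_i$. For indices carrying only $g$-queries this is harmless: just do not relabel them, so your ``every nonempty bucket'' should read ``every bucket containing an $f$-query.'' But for an index $i$ carrying both an $f$-query $x$ and a $g$-query $y$, no relabeling of the same $(S,a)$ can work. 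Indeed, $x\oplus a\in E_i$ and $y\in E_i^\perp$ force $\langle a,y\rangle=\langle x,y\rangle$. Hence $f(x)g(y)=(-1)^{\langle x,y\rangle}$ in every \yes-completion and $-(-1)^{\langle x,y\rangle}$ in every \no-completion with that spread and offset.

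Such cross coincidences are not excluded by failing Game 4, since they are not $k$ independent vectors inside a single $E_i$ or a single $E_i^\perp$. In the regime of the lemma they are typical. By $GL_n(\field_2)$-invariance of a uniform spread, the member containing $x\oplus a$ is a uniform $m$-dimensional subspace through that vector, so a fixed pair $(x,y)$ is a cross coincidence with probability about $2^{-m}$. The expected number of them is therefore about $\ell^2/2^m=c_0^2\,2^{m(k-2)/k}$, which is exponentially large for every $k\ge 3$. So for typical transcripts the bijection you rely on does not exist.

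The paper's proof has the same hole: it asserts that ``the queries to $g$ remain unchanged.'' So you have not deviated from it, but neither argument proves the lemma beyond $\ell\approx 2^{m/2}$. Declaring cross coincidences a win in Game 4 would make the reduction valid, but it would cap Game 4's hardness at $2^{m/2}=2^{n/4}$. Going further needs a genuinely different indistinguishability argument, one that handles \no-completions with a different spread or offset rather than an exact relabeling of the same one.

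Note also that your final total-variation claim rests entirely on the swap. The completion count for $E_1$ contributes nothing to it, and in any case it counts completions of one subspace, not extendable spreads.
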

\begin{proof}
    The analysis about querying the offset vector and the subspace swapping argument remain unchanged, and hence we only focus on the subspace counting portion of the argument, to demonstrate the number of possible spreads in this case is still enough after this sequence of $\ell$ queries. Again, arbitrarily pick a subspace in $f$ with the maximal load, and without loss of generality assume this is $E_1$. That is, $E_1$ has been assigned $k-1$ linearly independent vectors.\footnote{Technically, this is \emph{at most} $k-1$, but for the sake of providing a bound we can assume the worst case, which is exactly $k-1$} We wish again to count the number of vectors that cannot be used to complete this spread. For each $E_i$ such that $\dim(E_i)\geq 1$, the pair of subspaces $\{E_1,E_i\}$ reduces the number of eligible vectors by $2^{k-1}\cdot 2^{\dim(E_i)}$. Considering all such $E_i$ for which this occurs, the following sum
\begin{equation*}
    \sum_{i:1\leq\dim(E_i)\leq k-1}2^{k-1}\cdot2^{\dim(E_i)}
\end{equation*}
is maximized when as many such subspaces as possible have dimension $k-1$, and the rest are assigned accordingly. We know there can be at most $(\ell-k+1)/(k-1)$ such subspaces (assuming, for simplicity and again without loss of generality, that this is an integer), with the remaining ones having dimension zero. Hence, the number of eligible vectors that we cannot use to complete the subspace $E_1$ is given by 
\begin{equation*}
    2^{k-1}\cdot\frac{\ell-k+1}{k-1}\cdot2^{k-1}=2^{2k-2}\cdot\frac{\ell-k+1}{k-1}
\end{equation*}
Also, the corresponding orthogonal complement has at most $k-1$ linearly independent vectors, so the remaining vectors need to be picked from a space of dimension $n-k+1$. Now, we need to choose $m-k+1$ additional vectors to complete this subspace, and this count is lower bounded by
\begin{align*}
    \frac{(2^{n-k+1}-2^{2k-2}\cdot\frac{\ell-k+1}{k-1})(2^{n-k+1}-2^{2k-1}\cdot\frac{\ell-k+1}{k-1})\cdots(2^{n-k+1}-2^{m+k-3}\cdot\frac{\ell-k+1}{k-1})}{(2^{m-k+1}-1)\cdots(2^{m-k+1}-2^{m-k+2})}.
\end{align*}
For $\ell=c_0\cdot 2^{m(k-1)/k}$, and $k\leq c_1\cdot m^{1/2}$, where $c_0$ and $c_1$ are some sufficiently small constant, this product is $\Omega(2^m)$. Hence, with the same reasoning as above, the search space is not small enough such that the adversary can guess which spread it is with a sufficiently small number of queries.
\end{proof}
\begin{remark}
    Ideally, one would like to take the number of collisions all the way up to $\widetilde{O}(m)$, which would later yield a tight bound of $\widetilde{\Omega}(2^{n/2})$. However, as one can see via analyzing the above fraction, this is not possible, at least without significantly improving the above subspace counting argument. Indeed, we need at the very least to ensure that the last factor in the numerator, which is the smallest, is strictly positive. Thus, we require that $2^{n-k+1}-\ell\cdot 2^{m+k-3}/(k-1)>0$, which, ignoring low order-terms, leads to the condition
    \begin{align*}
        m-2k+4&>\frac{m(k-1)}{k}
    \end{align*}
    or, reorganizing, $m >2k^2-4k$. This implies that we must have $k<c_1\cdot m^{1/2}$, showing that with this counting argument, the $1-o(1)$ factor in the final bound is necessary. We leave removing this $o(1)$ for future work, which would either require a more refined counting argument to shave this $o(1)$ term, or to use a different class of bent boolean functions for the construction altogether.
\end{remark}
We now prove the following:
\begin{theorem}\label{thm:adaptive_k_collisions}
    The adversary requires $\Omega(2^{m\frac{k-1}{k}})$ queries to win Game 4, for any $k \leq c_1 m^{1/2}$. In particular, the extremal Forrelation problem requires $\Omega(2^{\frac{n}{2}(1-o(1))})$ queries, where the $o(1)$ term is of order $O(1/\sqrt{n})$.
\end{theorem}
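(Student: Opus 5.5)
The plan is to follow the template of the proof of \cref{thm:nonadaptive-2-collision} and \cref{thm:adaptive_two_collision}, replacing the birthday-paradox step with the $k$-collision tail bound of \cref{lemma:nonuniform_balls_and_bins}. As in Game 3, the queries are processed in the interleaved order $\mathcal{P}_1^f,\mathcal{P}_1^g,\dots,\mathcal{P}_\ell^f,\mathcal{P}_\ell^g$. First, the event that the offset $a$ is queried is discarded; it has probability $O(\ell/2^n)$. For each query we then classify the buckets exactly as before:
\begin{itemize}
\item buckets that are forbidden by span or orthogonality constraints receive probability $0$;
\item a query spanned by vectors already in one bucket is forced there, which adds no new linearly independent vector and so never advances a $k$-collision;
\item the remaining eligible buckets receive probability proportional to the number of completions.
\end{itemize}

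The one new ingredient is the relative weight of a bucket that already holds $j$ independent vectors, $1\le j\le k-1$. I would redo the Gaussian-binomial ratio computation from the $g$-analysis, comparing ${2m-j\choose m-j}_2$-type counts against ${2m-1\choose m-1}_2$-type counts together with the completion counts for the partner subspace. The target is to show that such a bucket is at most $2^{j}(1+o(1))$ times as likely as an empty bucket, with the worst case again being the analogue of $t=m-1$. Because $k\le c_1\sqrt m$ and $\ell<\tfrac14(2^m+1)$, every correction factor is $1+O(2^{k}\ell/2^{2m})$ and is negligible.

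Next, I would bound the probability of a $k$-collision in $S$ (and symmetrically in $S^\perp$). Let $A_k=\sum_{T\in Z_k}\mathbbm{1}[T]$. Then $\mathbb{E}[\mathbbm{1}[T]]\le\sum_j\prod_{i\in T}\mathcal{P}_{i,j}$. The generalized rearrangement inequality (\cref{lemma:generalized_rearrangement_inequality}) lets us align all distributions in sorted order, and the convexity argument from the nonadaptive proof then lets us assume all $\mathcal{P}_i$ are equal, with values in $\{0,p,2p,\dots,2^{k-1}p\}$. The heavy buckets are those with load at least $1$, of which there are at most $\ell$, and the mass on buckets of load $j$ is at most $\ell 2^j p/j$. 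Since $p=\Theta(2^{-m})$, this gives
\[
\|\mathcal{P}\|_k^k\le (2^m)p^k+\sum_{j<k}\frac{\ell}{j}(2^jp)^k=O\!\left(2^{-m(k-1)}\right)\cdot\left(1+\frac{\ell\,2^{k^2}}{2^m}\right).
\]
With \cref{lemma:nonuniform_balls_and_bins}, $\Pr[A_k\ge1]\le{\ell\choose k}\|\mathcal{P}\|_k^k$.

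For $\ell=c_0 2^{m(k-1)/k}$ the leading term is $(e c_0)^k/k^k$, which is small. The heavy-bucket term is controlled because $2^{k^2}\le 2^{c_1^2 m}$ and $\ell/2^m\le 2^{-m/k}$. This trade-off is exactly where the $k\lesssim\sqrt m$ restriction enters; if it is too lossy, I would instead bound the heavy term by conditioning on loads being at most $k-1$ and using a sharper per-bucket count. A union bound over $f$ and $g$ finishes the first claim.

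For the second claim, the preceding lemma reduces Game 1 to Game 4 with $\ell\le c_0 2^{m(k-1)/k}$. Choosing $k=\lfloor c_1\sqrt m\rfloor$ gives $2^{m(1-1/k)}=2^{\frac n2(1-O(1/\sqrt n))}$. The main obstacle is the adaptive, interleaved weight bound for partially filled buckets. The orthogonality constraints coming from $S$ and $S^\perp$ simultaneously, with many vectors per bucket, complicate the counting in the ratio computation. I would handle this by counting completions of the pair $(E_i,E_i^\perp)$ jointly, and I would accept a factor $2^{O(k)}$ per level, since the argument above tolerates it.
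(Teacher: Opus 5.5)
You follow the paper's proof closely: interleaved distributions $\mathcal P^f_i,\mathcal P^g_i$, the three-case classification, and a Gaussian-binomial ratio bounding how much heavier a loaded bucket is than an empty one. The paper uses the single factor $2^{k-1}$, while you aim for a graded $2^j$. You then use the generalized rearrangement inequality to make all distributions identical, apply \cref{lemma:nonuniform_balls_and_bins}, and take $k=\Theta(\sqrt m)$.

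\textbf{The gap.} The step that fails is the one you hedged on, the heavy-bucket term. Your two facts only give $\ell\,2^{k^2}/2^m\le 2^{c_1^2m-m/k}$, which is small only for bounded $k$. The only slack available at $\ell=c_0 2^{m(k-1)/k}$ is $\binom{\ell}{k}2^{m}p^k\approx(ec_0C/k)^k=2^{-\Theta(k\log k)}$. The heavy contribution is therefore about $2^{k^2-m/k-\Theta(k\log k)}$. This is $o(1)$ only for $k\lesssim m^{1/3}$, and it is $2^{\Theta(m)}$ at $k=c_1\sqrt m$. So your argument gives neither $\Omega(2^{m(k-1)/k})$ for all $k\le c_1\sqrt m$ nor the $O(1/\sqrt n)$ in the second claim. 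The restriction it actually imposes is $k\lesssim m^{1/3}$, not $\sqrt m$. For the same reason you cannot ``accept a factor $2^{O(k)}$ per level'': raised to the $k$-th power in $\|\mathcal P\|_k^k$, it becomes $2^{O(k^2)}$.

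\textbf{The paper has the same weakness.} Its closed form for $\|\mathcal P_i^g\|_k^k$ has $(2^{k(k-1)}-1)z_1$ in the numerator, with $z_1$ up to $\ell$. Calling that $O(2^{-m(k-1)})$ silently treats $2^{k(k-1)}$ as a constant. The loss is built into the reduction to identical distributions, which hands out $\Theta(\ell)$ maximally heavy buckets for free.

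\textbf{A repair.} In the ratio computation, the boost for a query landing in $E_i^\perp$ comes from the \emph{partner} $E_i$ holding vectors orthogonal to it, not from $E_i^\perp$'s own load. Had it been own load, as your phrase ``bucket that already holds $j$ vectors'' suggests, even a careful sequential bound would carry an unabsorbable $2^{k(k-1)/2}$. So union-bound jointly over:
\begin{itemize}
\item a pair index $i$;
\item the set $T_1$ of $j\le k-1$ queries landing in $E_i$;
\item the set $T_2$ of $k$ queries completing the collision in $E_i^\perp$.
\end{itemize}
Sequential conditioning bounds each such event by $p^{j+k}2^{j(k-1)+jk}$, with $p\le C2^{-m}$. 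Hence
\[
\Pr[\text{$k$-collision in } S^\perp]\le(2^m+1)\binom{\ell}{k}p^k\sum_{j\ge0}\binom{\ell}{j}p^j2^{2jk}\le 2\Bigl(\frac{ec_0C}{k}\Bigr)^k\exp\bigl(c_0C\,2^{2k-m/k}\bigr),
\]
and symmetrically for $S$. This is small once $k\le c\sqrt m$: each partner vector costs $\ell p\approx 2^{-m/k}$ but buys only $2^{O(k)}$. This is where $\sqrt m$ genuinely enters the collision analysis.

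If you only want the ``in particular'' statement, your crude bound suffices once you lower the target to $\ell=2^{m-C'\sqrt m}$ with $k=\Theta(\sqrt m)$.
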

\begin{proof}
    The proof is similar to that of~\cref{thm:nonadaptive-2-collision}, the main changes occurring when bounding the probability of $k$ collisions in Case~3 for $g$. There is a slight change to case one: the relevant case here is that the vector is assigned to some subspace bucket $E_i^\perp$ where the corresponding $E_i$ contains any vector not orthogonal to it. However, this is a minute detail, and the end result is still the same: such subspaces have probability 0. Hence, we only concern ourselves with Case~3. 
    
    This Case~3 itself is technically split into $k-1$ different cases, depending on the number of orthogonal vectors in the corresponding subspace. However, for the sake of bounding the probability of a collision, we can assume that every corresponding subspace in $S$ has the maximal number of orthogonal vectors, $k-1$, since this maximizes the corresponding quantity. As a reminder, we are interested in counting the number of possible completions of $E_i^\perp$ and $E_j^\perp$. There are a few things that change here. First, if we assume no $k$-collision occurs, this means $f$ can contain subspaces with $k-1$ orthogonal vectors to some $x_1$. counting the number of $m$-dimensional subspaces containing $x_1$ that are also orthogonal to these $k-1$ vectors yields the quantity
\begin{equation}\label{eq:2m-k_choose_m-1}
    {2m-k\choose m-1}_2=\frac{(2^{2m-k}-1)\cdots(2^{m-k+2}-1)}{(2^{m-1}-1)\cdots(2-1)}
\end{equation}
This counts the number of possible subspaces when we assume $x_1\in E_j^\perp$. %
The completion of $E_i^\perp$ in this case remains unchanged: since $E_i$ is empty, we just cannot use any of the $2^m$ vectors in $E_j^\perp$. This gives the same count as before in \cref{x1_not_in_Ei}, which we reproduce here for convenience:
\begin{equation*}
        \frac{(2^{2m}-2^m)(2^{2m}-2^m\cdot2)(2^{2m}-2^m\cdot2^2)(2^{2m}-2^m\cdot2^3)\cdots(2^{2m}-2^{m}\cdot2^{m-1})}{(2^m-1)(2^m-2)(2^m-4)\cdots(2^m-2^{m-1})}.
\end{equation*}
The second change is in counting the number of possible $E_j^\perp$ subspaces when we already have $x_1\in E_i^\perp$. Note again, $E_i^\perp$ can contain at most $m-k+1$ linearly independent vectors orthogonal to the $k-1$ vectors (in other words, $t\leq m-k-1$) for the same reason as $t\leq m-1$ in the no-collision case (\cref{thm:nonadaptive-2-collision}). The count for this subspace then becomes
\begin{equation}\label{eq:2m-k+1_choose_m-1_with_t_forbidden}
    \frac{(2^{2m-k+1}-2^t)(2^{2m-k+1}-2^t\cdot2)(2^{2m-k+1}-2^t\cdot2^2)(2^{2m-k+1}-2^t\cdot2^3)\cdots(2^{2m-k+1}-2^{t}\cdot2^{m-1})}{(2^m-1)(2^m-2)(2^m-4)\cdots(2^m-2^{m-1})}.
\end{equation}
Again, the number of completions of $E_i^\perp$ here is the same as \cref{m_dim_subspaces_with_xi}, which we again reproduce here:
\begin{equation*}
        {2m-1\choose m-1}_2=\frac{(2^{2m-1}-1)(2^{2m-2}-1)\cdots (2^{m+1}-1)}{(2^{m-1}-1)(2^{m-2}-1)\cdots (2-1)}.
\end{equation*}
Then, the analogous ratio (call it $R$) calculation becomes
\begin{align*}
    R&=\frac{(2^{2m-1}-1)(2^{2m-2}-1)\cdots (2^{m+1}-1)}{(2^{2m-k}-1)\cdots (2^{m-k+2}-1)}\cdot\frac{(2^{2m-k+1}-2^t)(2^{2m-k+1}-2^t\cdot2)\cdots (2^{2m-k+1}-2^{t}\cdot2^{m-1})}{(2^{2m}-2^m)(2^{2m}-2^m\cdot 2)\cdots (2^{2m}-2^{m}\cdot2^{m-1})}\\
    &= \prod_{s=1}^{m-1} \frac{2^{2m-s}-1}{2^{2m-k+1-s}-1}\cdot\prod_{s=1}^{m} \frac{2^{2m-k+1}-2^t\cdot 2^{s-1}}{2^{2m}-2^m\cdot 2^{s-1}}\\
    &= 2^{(k-1)(m-1)}\prod_{s=1}^{m-1} \frac{2^{2m-s}-1}{2^{2m-s}-2^{k-1}}\cdot 2^{-(k-1)m}\prod_{s=1}^{m} \frac{2^{2m}-2^{t+s-1+k-1}}{2^{2m}-2^m\cdot 2^{s-1}} \\
    &\geq 2^{-(k-1)}\prod_{s=1}^{m-1} \frac{2^{2m-s}-2^{k-1}}{2^{2m-s}-2^{k-1}}\cdot \prod_{s=1}^{m} \frac{2^{2m}-2^{m+s-3}}{2^{2m}-2^{m+s-1}} \\
    &= 2^{-(k-1)}\cdot \prod_{r=1}^{m} \frac{1-2^{-(r+2)}}{1-2^{-r}}\,,
\end{align*}
where the inequality follows from setting $t=m-k-1$ to minimize the product; and the last inequality is a change of variables. Now, the resulting telescoping product can be computed as
\begin{align*}
    \prod_{r=1}^{m} \frac{1-2^{-(r+2)}}{1-2^{-r}}
    = \frac{(1-2^{-(m+1)})(1-2^{-(m+2)})}{(1-2^{-1})(1-2^{-2})} > 1\,,
\end{align*}
from which we get 
\[
R \geq \frac{1}{2^{k-1}}
\]
This again tells us that the largest probability is at most a multiple of that of the empty probability $p_e$, specifically with scaling factor of at most $2^{k-1}$. We now proceed with a similar argument as in \cref{sec:nonadaptive_2_collision_extremal_Forrelation} to argue that, in view of upper bounding the probability of a $k$-collision, and specifically to maximize the quantity
\begin{equation*}
    \mathbb{E}[\mathbbm{1}[S]]=\sum_{j=1}^{2^m+1}\prod_{i\in S}\mathcal{P}_{i,j}\,,
\end{equation*}
one can take the distributions $\mathcal{P}_i^g$ to all be equal. Indeed, by \cref{lemma:generalized_rearrangement_inequality}, the ordering of the subspaces in each of the distributions should be the same. Furthermore, we assume each subspace $E_i^\perp$ that does not fall in case 1 or case 2 has the corresponding $E_i$ containing as many orthogonal vectors as possible, i.e., $k-1$, as doing so can only increase the above sum. Then, with a similar argument as in the no collision case, we argue that each of these probability distributions, which can only take three permissible values (namely, $\{0, p_e, 2^{k-1}p_e\}$) are thus identical. From here, we can explicitly compute the required norm $\|\mathcal{P}_i^g\|_k^k$, using the same method as before. Assume we have that $z_0$ subspaces have probability $0$ and $z_1$ have probability $2^{k-1}p_e$ (also implying $z_0+z_1\leq \ell$). Then, $p_e$ must satisfy
\begin{align*}
    2^{k-1}z_1p_e+(2^m+1-\ell-z_0-z_1)p_e&=1\\
    p_e=\frac{1}{2^m+1-\ell-z_0+(2^{k-1}-1)z_1}.
\end{align*}
This gives that $\|\mathcal{P}_i^g\|_k^k$ is equal to
\begin{align*}
    \|\mathcal{P}_i^g\|_k^k&=\left(\frac{2^{k-1}}{2^m+1-\ell-z_0+(2^{k-1}-1)z_1}\right)^kz_1
    +\left(\frac{1}{2^m+1-\ell-z_0+(2^{k-1}-1)z_1}\right)^k(2^m+1-\ell-z_0-z_1)\\
    &=\frac{2^m+1-\ell-z_0+(2^{k(k-1)}-1)z_1}{(2^m+1-\ell-z_0+(2^{k-1}-1)z_1)^k}.
\end{align*}
Assuming $\ell<\frac{1}{4}(2^m+1)$, this is of order $O(1/2^{m(k-1)})$, and so
\begin{align*}
    \Pr[B_\ell\geq k]&\leq {\ell\choose k}O\left(\frac{1}{2^{m(k-1)}}\right)
    =O\left(\frac{\ell^k}{2^{m(k-1)}}\right).
\end{align*}
We then conclude by observing that this probability of a $k$-collision is $o(1)$ unless $\ell=\Omega(2^{m(k-1)/k})$, which concludes the proof.
\end{proof}

\paragraph{Acknowledgments.} The authors would like to thank Rocco Servedio for helpful discussions and comments.
\printbibliography
\clearpage
\appendix

\section{In passing: An optimal lower bound for the Generalized Simon's Problem}\label{sec:gsp_proof}

As discussed in the introduction, (a simpler version of the) ideas similar to those underlying our main result on extremal Forrelation can be leveraged to obtain a tight bound on another problem, the Generalized Simon's problem, establishing~\cref{thm:generalized:simons}.  

We first formally define the problem considered. Let $p$ throughout be a prime,  and choose uniformly at random a subgroup $S$ such that $\dim(S)=k$, for some constant $k$. We then consider a function $f\colon\field_p^n\rightarrow X$, for some sufficiently large universe $X$,\footnote{Ye~\etal~\cite{DBLP:journals/iandc/YeHLW21} define this on the group $\mathbb{Z}_p^n$, but this turns out to be equivalent.} with the promise that for two vectors, if $x+y\in S$, then $f(x)=f(y)$ (where addition as taken as coordinate-wise addition modulo $p$). In this sense, this function $f$ is a generalization of Simon's problem, where instead of being promised the function is $2$-to-$1$, we are promised it is $p^k$-to-$1$. An adversary is given query access to this function $f$, and their goal is to recover the subgroup $S$ with high probability. We denote an instance of this problem as $\gsp(n,k,p)$. %

\paragraph{(Some) group theory} We also briefly recall some group-theoretic notation. For a subgroup $H$, and a fixed vector $g$, we let $g+H$ be the set $\{g+h:h
\in H\}. $We let $\langle a_1,\cdots ,a_k\rangle$ denote the set of vectors generated by $a_1,\dots,a_k$. In other words, it is the set of vectors spanned by $\{a_1,\dots,a_k\}$. We call the $a_i$ the generators. A set of generators can be seen to partition the full space into cosets as follows. Let $A=\{a_i\}$ be a set of generators $\{a_i\}$ for a subgroup $H$. Then, $A$ partition the space $\field_p^n$ into $p^n/|H|$ cosets: to obtain the cosets, we sequentially choose any fixed vector $g_i$ (not already in a coset) and assign to its coset all the vectors in the set $g+H$, repeating until every vector in the space falls in a coset. We then call each $g_i$ a coset representative.

We also use another common result in balls and bins analysis regarding the number of collisions created. Here, if we denote the bin where each ball ends up as a sequence of variables $x_1,\dots,x_\ell$, a collision is simply a pair $(x_i,x_j)$ where $x_i=x_j$ and $i\neq j$. 
\begin{lemma}\label{lemma:collision_count_bound}
    Let $C_\ell$ be the number of collisions after $\ell$ balls are tossed independently into $N$ bins according to the probability distribution $\mathcal{D}$. Let $k$ be any integer. Then
    \begin{equation*}
        \Pr[C_\ell\geq k]\leq \frac{{\ell\choose 2}\|\mathcal{D}\|_2^2}{k}.
    \end{equation*}
\end{lemma}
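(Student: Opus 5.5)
The plan is a first-moment bound followed by Markov's inequality. Write $C_\ell=\sum_{1\le i<j\le \ell}\mathbbm{1}[x_i=x_j]$, where the $x_i\in\{1,\dots,N\}$ are the (independent) bins of the $\ell$ balls, each distributed according to $\mathcal{D}$. For a fixed pair $i<j$, independence gives
\begin{equation*}
\Pr[x_i=x_j]=\sum_{b=1}^N \Pr[x_i=b]\Pr[x_j=b]=\sum_{b=1}^N \mathcal{D}(b)^2=\|\mathcal{D}\|_2^2 .
\end{equation*}
Linearity of expectation over the ${\ell\choose 2}$ pairs then yields $\mathbb{E}[C_\ell]={\ell\choose 2}\|\mathcal{D}\|_2^2$. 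This step needs only pairwise independence, and no control of higher moments is required.

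Since $C_\ell\ge 0$, Markov's inequality gives, for any positive integer $k$,
\begin{equation*}
\Pr[C_\ell\ge k]\le \frac{\mathbb{E}[C_\ell]}{k}=\frac{{\ell\choose 2}\|\mathcal{D}\|_2^2}{k},
\end{equation*}
which is exactly the claimed bound.

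For $k\le 0$ the right-hand side is either nonpositive or undefined, so the statement should be read for $k\ge 1$. At $k=1$ it recovers the birthday bound, matching the $k=2$ case of \cref{lemma:nonuniform_balls_and_bins}.

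I expect no real obstacle here. The only point needing care is the definition of a collision: it must be counted over unordered pairs $i<j$ so that the prefactor is ${\ell\choose 2}$ rather than $\ell(\ell-1)$.

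In the intended application the balls are assigned according to possibly different distributions $\mathcal{D}_i$. The same computation then gives $\Pr[x_i=x_j]=\langle \mathcal{D}_i,\mathcal{D}_j\rangle\le \max_i\|\mathcal{D}_i\|_2^2$ by Cauchy--Schwarz. So the bound extends verbatim with $\|\mathcal{D}\|_2^2$ replaced by $\max_i\|\mathcal{D}_i\|_2^2$, which is the form likely needed in the Generalized Simon's Problem analysis.
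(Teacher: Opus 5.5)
Your proposal is correct and is essentially the paper's own argument: compute the pairwise collision probability $\|\mathcal{D}\|_2^2$, sum over the ${\ell\choose 2}$ pairs by linearity of expectation, and apply Markov's inequality. The paper also implicitly restricts to $k>0$, as you note.
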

\noindent This is a standard result: we include the proof for completeness.
\begin{proof}
   Let $\mathcal{D}_j$ be the probability that any individual toss lands in bin $j$. Then, for any two tosses, the probability that they land in the same bucket and cause a collision is given by $\sum_j\mathcal{D}_j^2=\|\mathcal{D}\|_2^2$. By linearity of expectation, the expectation of $C_\ell$ is then
    \begin{align*}
        \mathbb{E}[C_\ell]={\ell\choose 2}\|\mathcal{D}\|_2^2
    \end{align*}
    By Markov's inequality, we get, for any $k > 0$,
    \begin{align*}
        \Pr[C_\ell\geq k]&\leq \frac{\mathbb{E}[C_\ell]}{k}=\frac{{\ell\choose 2}\|\mathcal{D}\|_2^2}{k}. \qedhere
    \end{align*}
\end{proof}\bigskip

We now leverage some of the ideas used for our main result on Forrelation, specifically the subspace counting, to prove a tight lower bound for adaptive algorithms for the Generalized Simon's Problem. Namely, we will prove the following
\begin{theorem}\label{thm:tight_generalized_simons_bound}
    Given a problem instance $\gsp(n,k,p)$, no adaptive adversary can determine the subgroup $S$ without using at least $\ell=\Omega(\max\{k,\sqrt{kp^{n-k}}\})$ queries. Hence, determining $S$ requires $\Theta(\max\{k,\sqrt{kp^{n-k}}\})$ queries.
\end{theorem}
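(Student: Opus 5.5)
We plan to prove the two lower bounds $\Omega(k)$ and $\Omega(\sqrt{kp^{n-k}})$ separately. The upper bound is taken from Ye~\etal{} as stated. The $\Omega(k)$ bound is information-theoretic. Each query either returns a value already seen, which certifies that a difference lies in $S$, or a fresh value, which certifies that certain differences are \emph{not} in $S$. The adversary only learns elements of $S$ through differences $x_i-x_j$ with $f(x_i)=f(x_j)$. With $\ell$ queries, the learned vectors lie in the span of $\{x_i-x_1\}$, which has dimension at most $\ell-1$. So with fewer than $k$ queries (say $\ell\le k/2$), at least $k-\ell+1$ dimensions of $S$ remain unconstrained. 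A Gaussian-binomial count (\cref{def:gaussian_binomial}) shows that the number of $k$-dimensional subgroups consistent with the transcript is large, so the adversary guesses $S$ with probability $o(1)$.

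For the main term, we model the interaction as an assigner game like Game~3. The assigner lazily samples $S$, and answers each new query $x_i$ either with a fresh symbol or with the label of an earlier query whose coset it shares. Let $V_i$ be the span of the collision differences learned before query $i$, let $d_i=\dim V_i$, and write $V=\operatorname{span}\{x_j-x_1: j< i\}$. Conditioned on the transcript, $S$ is uniform among $k$-dimensional subspaces containing $V_i$ and avoiding the forbidden differences. Query $x_i$ collides with a particular earlier query $x_j$ (outside the already-known coset structure) exactly when $x_i-x_j\in S$. The key step is a ratio of subspace counts. The numerator counts $k$-dimensional subspaces containing $V_i+\langle x_i-x_j\rangle$, namely ${n-d_i-1\choose k-d_i-1}_p$. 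The denominator counts those containing $V_i$, namely ${n-d_i\choose k-d_i}_p$. The ratio is $\frac{p^{k-d_i}-1}{p^{n-d_i}-1}\le 2\,p^{k-n}$. The forbidden differences only remove an $O(\ell^2 p^{k-n})$ fraction of candidates, so they perturb this probability by a $(1+o(1))$ factor whenever $\ell\ll p^{(n-k)/2}$.

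The collisions that are genuinely informative come in two kinds. The first kind is a \emph{new} collision, meaning one that increases $\dim V_i$. At most $k$ of these can occur before $S$ is determined. So the adversary must obtain $k$ independent collisions, and each pair of queries produces one with probability $O(p^{k-n})$. Treating the pairs as balls and bins, \cref{lemma:collision_count_bound} bounds the number of collisions: $\Pr[C_\ell\ge k]\le {\ell\choose 2}\cdot O(p^{k-n})/k$. This is $o(1)$ unless $\ell=\Omega(\sqrt{k p^{n-k}})$. This is exactly where the extra $\sqrt{k}$ over Ye~\etal{} comes from: they stop at a single collision, whereas we require $k$ of them.

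The second kind is needed to complete the argument, mirroring \cref{lemma:game_2_easier_than_game_1}. If fewer than $k$ independent collisions occur, then $\dim V_\ell<k$. The number of $k$-dimensional subgroups containing $V_\ell$ and avoiding the at most ${\ell \choose 2}$ forbidden differences is then at least ${n-d\choose k-d}_p\,(1-O(\ell^2p^{k-n}))$, which is $\ge p^{\,n-k}/2$. Every such subgroup is equally consistent with the transcript, so the success probability is $o(1)$. We expect the main obstacle to be handling adaptivity rigorously. The per-pair collision probabilities are conditional and depend on the evolving $V_i$ and the forbidden set. We would handle this by coupling with independent Bernoulli$(2p^{k-n}(1+o(1)))$ indicators for each new pair, one for each of the at most $i-1$ pairs created at step $i$. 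Summing their expectations over the rounds gives the same ${\ell\choose 2}O(p^{k-n})$ expectation bound, and Markov's inequality then applies as in \cref{lemma:collision_count_bound}.
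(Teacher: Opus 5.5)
Your outline follows the paper's: a first-moment bound on informative collisions via \cref{lemma:collision_count_bound}, then a count of the subgroups consistent with fewer than $k$ collisions. Your Gaussian-binomial ratio $\frac{p^{k-d_i}-1}{p^{n-d_i}-1}$ is correct. The gap is how you handle the non-collision constraints. Your claim that the forbidden differences remove only an $O(\ell^2p^{k-n})$ fraction of candidates is a union bound. As you note yourself, it is meaningful only when $\ell \ll p^{(n-k)/2}$. But everything the theorem adds over Ye~\etal{} happens for $p^{(n-k)/2} \ll \ell \le c\sqrt{kp^{n-k}}$ with $k$ growing, and there $\binom{\ell}{2}p^{k-n} \approx c^2k/2 \gg 1$. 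In that range neither key step is justified.

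\begin{itemize}
\item The per-pair posterior bound $O(p^{k-n})$ can fail, and with it your Bernoulli coupling and the bound $\Pr[C_\ell\ge k]\le\binom{\ell}{2}O(p^{k-n})/k$. For instance, suppose $\dim V_i=k-1$ and all but an $\varepsilon$-fraction of the roughly $p^{n-k}$ candidates $V_i+\langle w\rangle$ have been excluded. Then a pair whose difference spans a surviving candidate collides with probability about $1/(\varepsilon p^{n-k})$.
\item The completion count also breaks. When $\dim V_\ell=k-1$ there are only $\frac{p^{n-k+1}-1}{p-1}$ candidates. Each of up to $\binom{\ell}{2}\approx c^2kp^{n-k}/2$ forbidden differences can exclude one of them, and nothing in your argument prevents them from excluding all. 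Your lower bound $\binom{n-d}{k-d}_p(1-O(\ell^2p^{k-n}))$ is then negative, not at least $p^{n-k}/2$.
\end{itemize}

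As written, you prove only $\Omega(\sqrt{p^{n-k}})$, i.e., the statement for bounded $k$.

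The paper's proof does not supply the missing step either. It replaces each $\mathcal{P}_i$ by the uniform distribution without accounting for conditioning on non-collisions, and its final count takes $c=1/\sqrt{k}$ \emph{for constant $k$}. So you need a different idea. One that works is to never condition on non-collisions at all.

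\begin{itemize}
\item Fix the algorithm's coins. Call a collision of $x_t$ with $x_s$ informative if $x_t-x_s\notin V_{t-1}$.
\item There is at most one informative collision per step, and every other answer is either fresh or forced. So the whole transcript is determined by the set of informative pairs.
\item Hence the event that the first $j$ informative collisions are a given set of pairs forces $j$ fixed, linearly independent vectors into $S$. This has probability $\prod_{i<j}\frac{p^{k-i}-1}{p^{n-i}-1}\le p^{-j(n-k)}$.
\item A union bound over at most $(e\ell^2/j)^j$ such sets gives $\Pr[\text{at least } j \text{ informative collisions}]\le (e\ell^2p^{k-n}/j)^j$. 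For $j=\lceil k/2\rceil$ and $\ell=c\sqrt{kp^{n-k}}$ this is at most $(2ec^2)^j$.
\item On the complementary event, the success probability is at most the number of transcripts with fewer than $j$ informative collisions, which is at most $2\ell^{2(j-1)}\le 2\ell^{k-1}$, divided by $\binom{n}{k}_p\ge p^{k(n-k)}$. This is at most $2p^{-(n-k)}$ whenever $c^2k\le p^{n-k}$. In the other case the $\Omega(k)$ term dominates.
\end{itemize}
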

\begin{proof}
    Ye~\etal\ already prove the necessity of the $k$ term in the first part of the max~\cite{DBLP:journals/iandc/YeHLW21}, so we only focus on proving the necessity of the $\sqrt{kp^{n-k}}$ in the second term. We assume the algorithm is adaptively querying and randomized, and results for deterministic algorithms follow accordingly. Assume some randomized algorithm makes at most $\ell$ queries, and outputs the correct subgroup $S$ with probability at least $\frac{2}{3}$. The way the function $f$ is chosen is as follows:
    \begin{enumerate}
        \item Choose uniformly at random a subgroup $S$ of dimension $k$
        \item Choose uniformly at random a $p^k$-to-one mapping compatible with $S$
    \end{enumerate}
    We show that no adaptive algorithm can identify the subgroup $S$ with probability greater than $2/3$ without making at least $\Omega(\sqrt{kp^{n-k}})$ queries.

    First we note that the subgroup $S$ partitions $\field_p^n$ into $p^{n-k}$ cosets, which we may interchangeably refer to as \emph{bins} or \emph{buckets} going forward. We say a collision occurs between queries $i$ and $j$ whenever these two queries are assigned to the same bucket. For the purpose of the Generalized Simon's Problem, notice that whenever a collision occurs between $x_i$ and $x_j$, the adaptive algorithm learns that $x_i+x_j\in S$ (alternatively, one can reframe a collision as there being two queries appearing in the sequence such that $x_i+x_j\in S$). 
    
    To get some intuition, notice now that the task of uncovering the subgroup $S$ is akin to finding $k$ collisions (not collisions in one bucket). This is because of the following observation: Naively, each collision would give the algorithm information on what vectors are in the subgroup $S$, but nonadaptive queries may produce the same vector multiple times, thus not learning anything new. (For example, if an algorithm's queries results in finding out that $f(x)=f(y)$ and $f(v)=f(w)$, but $x+y=v+w$, then nothing was gained by the algorithm, even though a collision occurred.)
    However, an adaptive algorithm can proceed as follows. Say on the first collision, they learn that $a_1\in S$. Then, they can group all strings into cosets $x+\langle a_1\rangle$, where $\langle a_1\rangle$ is the subgroup generated by $a_1$. Now, having done this, the adaptive algorithm never needs to query multiple vectors in a coset, since knowing what one evaluates to will give the value of the others. Thus, they can continue the querying process, making sure only to query the coset representatives induced by $a_1$. Doing this ensures an adaptive algorithm will never produce a collision that lets them learn $a_1\in S$ again. Thus, they can continue making queries, where they only query one in each coset, until they find a second collision, and this collision is guaranteed to inform them of a new linearly independent vector $a_2\in S$. Then, they can do the same partitioning with the vectors $a_1, a_2$ to obtain the coset representatives induced by $\langle a_1,a_2\rangle$, and ensure they only query the coset representatives from now going forward. This ensures the next collision informs of another unique linearly independent vector $a_3$. This can be continued until the $k$-th collision, at which point the adaptive algorithm will know of all $k$ unique spanning vectors in $S$, and hence can identify the subgroup uniquely.

    With the above intuition, we proceed onto the formal proof. Consider a sequence of $\ell$ adaptive queries, $x_1,\dots,x_\ell$. We proceed to find the probability distributions $\mathcal{P}_1,\dots,\mathcal{P}_\ell$ for which the queries are assigned, conditioned over the randomness of the choice of $S$, assignment of the $p^k$-to-one mapping, and the values of the previously queried values. Here $\mathcal{P}_i$ is the probability distribution for where the $i$-th query lands. For the first query $x_1$, clearly $\mathcal{P}_1$ is uniform over all the $p^{n-k}$ buckets. For the second query, conditional on the aforementioned events, $x_2$ is slightly negatively correlated with the bin that contains $x_1$. This is because this bin already contains one vector, whereas the others are empty, so conditioned on all the previous events, there is a slightly higher probability of $x_2$ being assigned to an empty bin rather than to $x_1$'s bin. However, we can assume $\mathcal{P}_2$ is uniform over all bins. This is because the probability that the probability $x_2$ causes a collision with its true distribution is upper bounded than if it were assigned uniformly. We can continue using the same reasoning for all $\ell$ queries, and assume all $\mathcal{P}_i$ are uniform over the bins.

    Taking $\mathcal{P}_i$ as the distribution, by \cref{lemma:collision_count_bound}, we have
    \begin{align*}
        \Pr[C\geq k]&\leq \frac \mu k
        =\frac{{\ell\choose 2}}{kp^{n-k}}
        =\frac{O(\ell^2)}{kp^{n-k}}.
    \end{align*}
    Hence, for this to have probability at least $2/3$, we require that $\ell$ is at least $\Omega(\sqrt{kp^{n-k}})$.

    Finally, we claim that if the query algorithm does not find at least $k$ collisions in $\ell$ queries, they cannot guess the hidden subgroup with high probability. Let $\mathcal{H}$ be the set of subgroups with dimension $k$, that also contain the vectors known to be in $S$ from the collisions in the query sequence $\ell$. By assumption, there are at most $k-1$ of these, and hence the size of $\mathcal{H}$ is bounded as by the number of $k$-dimensional subgroups that also contain the given (at most) $k-1$ vectors. Since these $k-1$ vectors form a group of dimension $k-1$, to complete the subgroup $S$, this is equivalent to picking a dimension $1$ subgroup from a space of dimension $n-k+1$. Still, the count is not as simple as this, as the algorithm learns information from non-collisions as well. Namely, if they know $f(x)\neq f(y)$, then they also know $x-y\not\in S$, and thus have no reason to query this vector, or anything generated by it. Thus, we need to count the number of valid vectors that can still be used to complete the subspace. Since there are $\ell$ queries, the algorithm can know of at most $\ell$ distinct values of the function $f$. For each of these values, the algorithm can identify one unique vector to this value. Then, for each representative, by the above logic, we know that the pairwise difference cannot be an element in $S$. Furthermore, each of these vectors spans $p-1$ nonzero vectors, each of which also cannot be in $S$. Thus, the algorithm knows that to find the remaining vector in $S$, they cannot use the $p^{n-k+1}$ vectors already in $S$, nor the at most ${\ell\choose2}(p-1)\ell=\ell^2(p-1)$ vectors which are forbidden from taking the pairwise difference between bins. Then, the count of the number of possible subspaces is given by
    \begin{equation*}
        |\mathcal{H}|\geq \frac{p^{n}-p^{k-1}-\ell^2(p-1)}{p^k-p^{k-1}}.
    \end{equation*}
    Assume $\ell=c\cdot\sqrt{kp^{n-k}}$ for some constant $c$ (to be chosen later). Then, we have
    \begin{align*}
        |\mathcal{H}|&\geq \frac{p^{n}-p^{k-1}-c^2kp^{n-k}(p-1)}{p^k-p^{k-1}}
        =\frac{p^{n-k+1}-1-c^2kp^{n-2k+1}(p-1)}{p-1}
        =\sum_{i=0}^{n-k}p^i-c^2kp^{n-2k+1}.
    \end{align*}
    By choosing $c=1/\sqrt{k}$ (for constant $k$), we have that 
    $
        |\mathcal{H}|\geq p^{n-k}.
    $
    
    Then, supposing the query list only contains at most $k-1$ collisions, the query algorithm would essentially have to guess the correct subspace from among these, which happens with probability at most $1/p^{n-k}$. As such, no adaptive algorithm can succeed at identifying $S$ without using at least $\ell=\Omega(\sqrt{kp^{n-k}})$ queries.
\end{proof}
\end{document}